\documentclass[preprint,10pt,numbers]{elsarticle}

\usepackage[a4paper, top=2.5cm, bottom=2.5cm, left=2.5cm, right=2.5cm]{geometry}
\usepackage{amsmath, amssymb, amsthm}
\usepackage{mathtools}
\usepackage{bm}
\usepackage{booktabs}
\usepackage{array}
\usepackage{xcolor}
\usepackage{colortbl}
\usepackage{hyperref}
\usepackage{pifont}
\usepackage{cleveref}
\usepackage{multirow}
\usepackage{setspace}
\usepackage{natbib} 

\usepackage{enumitem}
\usepackage{algorithm}
\usepackage{algpseudocode}
\usepackage{tikz}
\usetikzlibrary{arrows.meta, positioning, fit, backgrounds}

\theoremstyle{plain}
\newtheorem{theorem}{Theorem}[section]
\newtheorem{proposition}[theorem]{Proposition}

\theoremstyle{definition}
\newtheorem{definition}[theorem]{Definition}
\theoremstyle{remark}
\newtheorem{remark}[theorem]{Remark}
\newtheorem{observation}[theorem]{Observation}
 
\newcommand{\R}{\mathbb{R}}
\newcommand{\Nagents}{\mathcal{N}}
\newcommand{\Items}{\mathcal{J}}
\newcommand{\Alloc}{\mathcal{X}}
\newcommand{\AllocIR}{\mathcal{X}^{\mathrm{IR}}}
\newcommand{\bv}[1]{\bm{#1}}
\newcommand{\E}{\mathbb{E}}
\newcommand{\norm}[1]{\left\lVert #1 \right\rVert}
\newcommand{\relu}[1]{\max\!\left(#1,\,0\right)}
\newcommand{\Lnorm}{\mathcal{L}_{\mathrm{norm}}}
\newcommand{\Ldiff}{\mathcal{L}_{\mathrm{diff}}}
\newcommand{\ind}[1]{\mathbf{1}\!\left[#1\right]}

\DeclareMathOperator*{\argmin}{arg\,min}
 
\definecolor{ejorblue}{RGB}{31,56,100}
\definecolor{lightblue}{RGB}{235,243,251}
\definecolor{lightgreen}{RGB}{230,245,230}
\definecolor{lightyellow}{RGB}{255,252,230}
\definecolor{warningred}{RGB}{180,40,40}
\definecolor{softgray}{RGB}{245,245,245}
\definecolor{darkgray}{RGB}{80,80,80}
 
\hypersetup{colorlinks=true, linkcolor=ejorblue,
            citecolor=ejorblue, urlcolor=ejorblue}

\begin{document}
\begin{frontmatter}

 \title{\Large\bfseries\color{ejorblue}
  Conditional Graph Diffusion for Negotiation Support: Overcoming Discrete Infeasibility and Preference Elicitation Gaps}

\author{Moirangthem Tiken Singh}
\address{Department of Computer Science and Engineering, DUIET, Dibrugarh University}
\ead{tiken.m@dibru.ac.in}

\begin{abstract}
Traditional bilateral negotiation support systems search over discrete allocation spaces. This approach encounters structural infeasibility when no discrete outcome satisfies individual rationality. It also fails to incorporate preference signals embedded in natural language dialogue. This study introduces the Conditional Graph Diffusion (CGD) framework, which addresses these limitations by generating recommendations in a continuous bilateral utility space.
A GATv2 encoder captures comparative bilateral preference structure through dynamic attention. A cross-attention mechanism fuses these strategic embeddings with transformer-based dialogue representations into a unified conditioning context for a denoising diffusion probabilistic model. An analytically derived normative guidance gradient is applied at inference time. It injects per-step monotonic corrections at each reverse diffusion step, steering generation toward individual rationality, security proximity, and equitability without model retraining.
Evaluation across synthetic, CaSiNo, and Deal or No Deal corpora confirms that accumulated corrections achieve an individual rationality rate of at least 0.997, a security gap of at most 0.009, and a symmetry gap within the 0.15 equitability threshold. Relative to the Nash Bargaining Solution, CGD reduces security gaps by up to 70-fold at a welfare cost of at most 3\%.
An ablation study confirms that naive constraint minimization without a learned generative prior fails normative compliance across heterogeneous corpora, failing individual rationality on Deal or No Deal and violating equitability on CaSiNo. A controlled misrepresentation experiment confirms the architectural capacity of cross-attention fusion to exploit informative dialogue signals, identifying pre-trained encoder integration as the primary direction for future work. An inference-time welfare guidance mechanism decouples normative compliance from welfare maximization within a single pipeline, recovering Pareto efficiency on CaSiNo without retraining while strictly preserving individual rationality.

\textbf{Keywords:} negotiation support systems; conditional diffusion
models; graph attention networks; normative compliance; bilateral
utility optimization; inference-time guidance
\end{abstract}

\end{frontmatter}

\section{Introduction}
\label{sec:intro}

Negotiation is a high-stakes decision-making process in economic and
organizational systems. Two agents divide indivisible resources under
incomplete information, which precludes a jointly optimal allocation.
A negotiation support system (NSS) acts as a neutral recommender that
proposes allocations satisfying individual rationality (IR), procedural
security, equity, and Pareto efficiency~\citep{Baarslag2017}.

Existing automated negotiation systems fail to satisfy all four criteria
simultaneously. This limitation has consequences for applications such
as supply chain contracting, multiparty procurement, and humanitarian
logistics. The rejection of a recommendation or an inequitable outcome
can terminate negotiations entirely.

The Nash Bargaining Solution (NBS)~\citep{Nash1950} provides a canonical
analytical framework. It selects the allocation maximizing the Nash
product under individual rationality constraints. However, the NBS
addresses only individual rationality and Pareto efficiency by design.
It provides no mechanism for enforcing security proximity, the
requirement that each agent receives at least its proportional-share
reference utility, or equitability, the requirement that the utility
difference between agents remains bounded. Empirical evaluation confirms
that NBS recommendations violate security proximity substantially on
real-world corpora even when individual rationality is satisfied.

Beyond this, the NBS inherits three structural limitations from its
discrete-allocation formulation. It is undefined when no allocation
satisfies both agents' reservation utilities. It conditions only on
declared values, ignoring the preference signals embedded in negotiation
dialogue~\citep{Lewis2017, Chawla2021}. Its exhaustive search over
feasible allocations also grows exponentially with the number of
issues~\citep{Chevaleyre2006IssuesMARA}.

Recent approaches address subsets of these limitations. Reinforcement
learning agents~\citep{Lewis2017, Brenting2024} incorporate dialogue
signals but optimize unilateral rewards and cannot guarantee bilateral
individual rationality. Large language model
agents~\citep{Bianchi2024NegotiationArena, kwon2025astra} exploit
conversational context but treat normative constraints as prompt-level
rules, providing no formal guarantees. Diffusion-based
planners~\citep{Janner2022Diffuser, Chi2023DiffusionPolicy} operate in
continuous spaces and support constraint integration but address
single-agent processes and do not enforce bilateral criteria. No prior
system simultaneously guarantees individual rationality, security
proximity, and equitability on real-world dialogue corpora.

This study introduces the CGD framework
to address these limitations within a unified generative architecture.
CGD encodes bilateral preference structures using a graph attention
encoder that captures the relational comparison between agent
preferences. This representation is fused with a dialogue encoder
through cross-attention. A denoising diffusion probabilistic
model~\citep{Ho2020} then generates recommendations in the continuous
bilateral utility space. An analytically derived inference-time guidance
mechanism~\citep{Ho2022CFG} enforces individual rationality, security
proximity, and equitability by injecting constraint-derived gradients
into each reverse diffusion step. The specific architectural choices,
including the selection of a dynamic graph attention encoder over static
alternatives, are motivated in Section~\ref{subsec:lit_gnn} and
empirically validated in Section~\ref{sec:eval}.

The framework provides three contributions aligned with the identified
structural limitations.

\begin{enumerate}[label=\textbf{(C\arabic*)}, leftmargin=3.5em,
itemsep=3pt]

\item \textbf{IR-Robust Generation via Continuous Relaxation.} CGD operates in a continuous utility space where individually rational utility vectors always exist, eliminating the hard solver failures typical of discrete allocation spaces. At inference time, an analytically derived guidance gradient guarantees a strictly positive utility increment for any IR-violating agent at each reverse diffusion step, provided the guidance scale exceeds a finite threshold. Whether these per-step corrections accumulate to full IR satisfaction depends on corpus distribution and calibrated guidance scale; empirical evaluation confirms near-perfect individual rationality across all evaluated corpora under standard calibration.

\item \textbf{Inference-time welfare recovery without retraining.}
A welfare guidance term constrained to preserve IR dominance recovers
Pareto efficiency at inference time while strictly maintaining
individual rationality, without modifying pre-trained model weights.
Normative compliance and welfare maximization are thereby decoupled
into separate operational phases within a single inference pipeline.

\item \textbf{Complexity-bounded inference independent of the
allocation space.}
The denoising inference cost scales with the model size and number of
diffusion steps, independently of the number of possible discrete
allocations. Exhaustive solver enumeration time grows exponentially
with the number of issues while CGD denoising time remains constant,
confirmed empirically across a range of issue counts and corpora.

\end{enumerate}

CGD is evaluated on three datasets spanning synthetic and human
negotiation domains: a synthetic corpus based on a symmetric Dirichlet
distribution~\citep{Bishop2006PRML}, the CaSiNo corpus of human
campsite negotiation dialogues~\citep{Chawla2021}, and the Deal or No
Deal corpus of integer-valued negotiation
instances~\citep{Lewis2017}. The results demonstrate that CGD achieves
near-zero security gaps and satisfies equitability across all corpora,
addressing normative criteria that the NBS structurally cannot enforce.

The remainder of this paper is organized as follows.
Section~\ref{sec:litreview} reviews related work and formalizes the
research gap. Section~\ref{sec:problem} defines the negotiation model
and normative criteria. Section~\ref{sec:method} presents the CGD
architecture and its theoretical properties. Section~\ref{sec:setup}
describes the datasets, baselines, and evaluation protocols.
Section~\ref{sec:eval} reports the empirical results.
Section~\ref{sec:discussion} presents the implications of the study.
Section~\ref{sec:conclusion} concludes the paper.

\section{Literature Review}
\label{sec:litreview}

The CGD framework spans four intersecting research streams:
computational negotiation and NSSs,
dialogue-grounded strategic reasoning, graph neural networks for
multi-agent interaction, and diffusion-based generative models for
constrained decision-making.

\subsection{Automated Negotiation and Negotiation Support}
\label{subsec:lit_negotiation}

Bilateral multi-issue negotiation has been studied as a computational
problem for several decades. The field has shifted from handcrafted
bidding strategies derived from time-dependent concession
rules~\citep{Baarslag2017} toward end-to-end learnable policies.
Bagga et al.~\citep{Bagga2022} introduce deep learnable strategy
templates trained on utility-labeled traces that transfer across
preference profiles without manual redesign. Renting
et al.~\citep{Brenting2024} propose a graph neural network policy that
encodes problem structure as a graph and learns via reinforcement
learning, matching canonical baselines without domain-specific feature
engineering. Zhang et al.~\citep{Zhang2026} recast negotiation as a
market-design problem and show that deep reinforcement learning agents
converge to high-welfare, near-symmetric equilibria. These systems
demonstrate that learned policies can match or exceed analytical methods
in welfare. However, all three optimize individual rewards and cannot
guarantee bilateral individual rationality or security compliance.

The dominant analytical benchmark for negotiation support remains the
NBS~\citep{Nash1950}. Nakamura~\citep{Nakamura2024} provides
efficiency-free characterizations showing that individual rationality
can replace Pareto optimality in Nash's theorem.
Zeng et al.~\citep{NeurIPS2024Bargaining} embed the NBS in a
meta-learning framework for fairness-aware optimization, demonstrating
that Nash bargaining extends naturally to deep learning pipelines.
Neither result addresses the NBS's structural absence of security
proximity or equitability guarantees. Empirical evaluation confirms
that satisfying individual rationality does not imply satisfying
security proximity (Section~\ref{sec:eval}), motivating security
proximity as an independent normative criterion.

Wu and Sun~\citep{DeepQ2025} propose an agent-based model integrating
a fairness function and Q-learning that reduces utility disparity
relative to classical strategies. Their scalar fairness term conflates
individual rationality with distributional symmetry and provides formal
guarantees for neither. CGD addresses this by decomposing normative
compliance into three analytically distinct constraints enforced through
separate terms in the guidance gradient.

\subsection{Dialogue-Grounded Negotiation}
\label{subsec:lit_dialogue}

Lewis et al.~\citep{Lewis2017} establish dialogue-grounded negotiation
as a distinct task through the Deal or No Deal corpus.
Chawla et al.~\citep{Chawla2021} extend this setting with the CaSiNo
corpus. In both corpora, negotiation unfolds in natural language,
requiring models to infer preferences from conversational context.
Large language models currently dominate this task.
Kwon et al.~\citep{kwon2025astra} propose an LLM-based agent combining
linear-programming offer optimization, structured opponent modeling, and
Tit-for-Tat reciprocity, achieving competitive individual utility at
the cost of a nontrivial walk-away rate.
Bianchi et al.~\citep{Bianchi2024NegotiationArena} show that
GPT-4-class models secure mutually beneficial deals but fail under
shifts in stakes or tactics.
Chawla et al.~\citep{Chawla2023Selfish} show that selfish
reinforcement learning does not inevitably degrade fairness.
Bhattacharya et al.~\citep{Bhattacharya2025} find that current large
language models still struggle with commitment and legitimacy in
high-stakes settings.

A persistent limitation across these systems is that LLM-based agents
optimize individual objectives and cannot formally guarantee bilateral
constraints such as individual rationality or security. The empirical
symmetry recovered by Chawla et al.~\citep{Chawla2023Selfish} is not
formally guaranteed and does not address security proximity. CGD
differs from these approaches by enforcing normative constraints via a
deterministic, analytically derived guidance gradient rather than
through prompt engineering or emergent training behaviour.

\subsection{Graph Neural Networks for Strategic Reasoning}
\label{subsec:lit_gnn}

Negotiation is inherently relational. Agents correspond to nodes,
issues correspond to attributes, and the feasibility of an allocation
depends on the joint preference graph. Graph neural networks therefore
provide a natural inductive bias for encoding negotiation structure.
Renting et al.~\citep{Brenting2024} demonstrate that graph
representations generalize across preference profiles where flat-vector
encoders fail. Fu et al.~\citep{Fu2024GNN} show that graph-structured
state representations in actor-critic models can elicit cooperation and
adaptation simultaneously, while flat-vector baselines collapse to
greedy strategies under opponent diversity.

The original Graph Attention Network (GAT)~\citep{Velickovic2018} uses static
attention coefficients computed as a function of each node's features
independently. Brody et al.~\citep{Brody2022} identify a key limitation
of this design: static attention cannot represent the class of attention
functions that depend on a joint comparison of both endpoint features.
Their Graph Attention Network v2 (GATv2) architecture addresses this by computing attention
coefficients from a joint nonlinear function of both source and
destination features. In bilateral negotiation, the signal that
determines whether a mutually beneficial trade exists is comparative.
Agent 1's value for item $j$ relative to agent 2's value for item $j$
determines who should receive it. Static attention evaluates each
agent's features in isolation and cannot directly represent this
comparative signal. Dynamic attention in GATv2 computes coefficients
from the joint features of both agents on each edge, directly encoding
the bilateral comparison. This architectural argument motivates the use
of GATv2 as the strategic encoder in CGD.

Cross-attention between graph and sequence representations provides the
fusion mechanism between the strategic and dialogue modalities.
Faber and Wattenhofer~\citep{CommFormer2024} show that a shared
Transformer actor with a graph-based centralized critic captures
long-range dependencies in networked multi-agent control, which is
architecturally similar to CGD's fusion of the graph strategic encoder
with the dialogue encoder. CGD extends these ideas by conditioning a
continuous diffusion denoiser on the fused representation. No prior
work combines graph-structured bilateral preference encoding with
continuous diffusion generation and inference-time normative guidance
for negotiation support.

\subsection{Diffusion Models for Decision-Making and Planning}
\label{subsec:lit_diffusion}

Denoising diffusion probabilistic models~\citep{Ho2020} have been
extended from image generation to sequential decision-making and
planning. Janner et al.~\citep{Janner2022Diffuser} introduce Diffuser,
the first diffusion-based trajectory planner for offline reinforcement
learning, demonstrating that diffusion models can represent multimodal
outcome distributions and be steered at inference time through
differentiable cost functions. Chi et al.~\citep{Chi2023DiffusionPolicy}
extend this to visuomotor control, showing that diffusion-based action
generation outperforms Gaussian and mixture-of-Gaussian policies on
manipulation benchmarks. Li~\citep{li2023efficient} proposes planning
in a learned latent action space, supporting an equivalence between
latent planning and energy-guided sampling that underpins CGD's
continuous utility relaxation. A comprehensive survey by
Zhu et al.~\citep{DiffusionRLSurvey2024} identifies constrained
generation as the central open problem: existing methods can encourage
constraint satisfaction through residual loss terms but cannot
guarantee it.

Classifier-free guidance~\citep{Ho2022CFG} circumvents retraining
costs by steering the denoising process directly at inference time.
Zhang et al.~\citep{Romer2025Constraints} recently applied this
principle to robot control, demonstrating that guiding pre-trained
diffusion policies can enforce novel geometric constraints during
deployment. Ajay et al.~\citep{Ajay2023DecisionDiffuser} and
Chen et al.~\citep{SimpleHD2024} extended diffusion planning to
hierarchical settings via subgoal conditioning. These frameworks are
confined to single-agent environments and lack mechanisms for bilateral
constraint enforcement.

CGD adapts classifier-free guidance to multi-agent normative settings.
Rather than relying on a learned classifier, the guidance gradient in
CGD is derived analytically from individual rationality, security, and
equitability penalty functions. This analytical derivation yields
provably monotone per-step corrections, as established in
Section~\ref{subsubsec:l1_theory}.

\subsection{Fairness, Normative Constraints, and AI}
\label{subsec:lit_fairness}

Bhattacharya et al.~\citep{Bhattacharya2025} show that fairness in
large language model bargaining emerges from pre-training rather than
from an explicit guarantee. Mouri Zadeh Khaki and
Choi~\citep{Hua2024MDPI} show that fairness violations correlate with
aggressive reward-maximizing prompting and that third-party norm
enforcement partially mitigates them without providing formal
guarantees. Wang~\citep{GAME2025MultiObjective} establishes the
existence of Pareto-Nash equilibria in finite multi-objective Markov
games, providing theoretical support for treating individual
rationality, security proximity, and equitability as simultaneously
constrained objectives.

Lewis et al.~\citep{Lewis2017} provide clear evidence that
reward-maximizing fine-tuning destroys bilateral symmetry in Deal or
No Deal. Chawla et al.~\citep{Chawla2023Selfish} show that curriculum
design can recover near-symmetric outcomes.
Zhang et al.~\citep{Zhang2026} show that decentralized agents converge
to equitable splits when market mechanisms select for them. Together,
these results indicate that normative compliance cannot be reliably
achieved through training signal design alone. An architectural
mechanism that enforces constraints at inference time, independently of
the training objective, is therefore necessary.

\subsection{Research Gap and Positioning of CGD}
\label{subsec:lit_gap}

\Cref{tab:related} summarizes the normative coverage of representative
recent systems. No prior work simultaneously guarantees individual
rationality, security proximity, and equitability on real-world dialogue
corpora.

\begin{table}[h!]
\centering
\caption{Normative coverage of representative negotiation systems.
\ding{51}: guaranteed or formally verified. $\approx$: empirically
recovered but not guaranteed. $\times$: not addressed or violated by
design.}
\label{tab:related}
\footnotesize
\setlength{\tabcolsep}{4pt}
\renewcommand{\arraystretch}{1.2}
\begin{tabular}{lcccc}
\toprule
\textbf{System} & \textbf{IR} & \textbf{Security}
& \textbf{Symmetry} & \textbf{Pareto} \\
\midrule
NBS \citep{Nash1950}
  & $\approx$ & $\times$ & $\times$ & \ding{51} \\
CGO (this work, ablation)
  & $\approx$ & $\times$ & $\times$ & $\times$ \\
RNN-SL \citep{Lewis2017}
  & $\times$ & $\times$ & $\approx$ & $\times$ \\
RNN-RL \citep{Lewis2017}
  & $\times$ & $\times$ & $\times$ & $\times$ \\
Selfish-RL \citep{Chawla2023Selfish}
  & $\times$ & $\times$ & $\approx$ & $\times$ \\
GNN-RL \citep{Brenting2024}
  & $\times$ & $\times$ & $\times$ & $\times$ \\
ASTRA \citep{kwon2025astra}
  & $\approx$ & $\times$ & $\times$ & $\times$ \\
\textbf{CGD (this work)}
  & \ding{51} & \ding{51} & \ding{51} & $\approx$ \\
\bottomrule
\end{tabular}
\end{table}

Three specific gaps motivate CGD. First, agent-based
systems~\citep{Lewis2017, Brenting2024} optimize individual utility and
cannot guarantee bilateral individual rationality. Second, LLM-based
systems~\citep{kwon2025astra, Bianchi2024NegotiationArena} exploit
linguistic context but treat normative compliance as a
prompt-engineering problem rather than an architectural guarantee.
Third, diffusion planners for
decision-making~\citep{Janner2022Diffuser, Chi2023DiffusionPolicy}
operate in single-agent environments with no mechanism for bilateral
constraint enforcement.

The contribution of CGD lies in a specific architectural composition designed to close all three gaps simultaneously. Specifically, a graph attention encoder captures the bilateral relational structure of preferences, and a dialogue encoder provides complementary preference signals from the conversational context. Once cross-attention fuses these modalities, a continuous diffusion denoiser generates recommendations in utility space. Finally, an analytically derived guidance gradient enforces individual rationality, security proximity, and equitability at inference time without retraining. While each component can be found in prior work in other domains, the overarching contribution is their composition for the bilateral normative negotiation support problem, together with the analytical guidance gradient derived from cooperative bargaining criteria. The formal problem statement follows in Section~\ref{sec:problem}.

\section{Problem Statement}
\label{sec:problem}

Bilateral negotiations require two self-interested parties to jointly
allocate a set of resources under incomplete information. A NSS acts as a third-party recommender that proposes
outcomes satisfying individual rationality, procedural security, equity,
and Pareto efficiency. No single allocation simultaneously optimizes all
four criteria, and any method that searches over discrete allocations or
conditions only on declared utilities inherits at least one structural
limitation. This section formalizes the environment, defines each
normative criterion, identifies the tensions among them, and specifies
the three structural limitations that motivate CGD.

\subsection{Negotiation Environment}
\label{subsec:environment}

Let $\Nagents = \{1, 2\}$ be the set of agents and
$\Items = \{1, \ldots, J\}$ the set of indivisible issues, each
assigned to exactly one agent. Agent $i \in \Nagents$ holds a private
value vector $\bv{v}_i \in \R^{J}_{\geq 0}$ normalized so that
$\sum_{j} v_{ij} = 1$. The joint preference structure forms the value
matrix $\bv{V} \in \R^{|\Nagents| \times |\Items|}$ with rows
$\bv{v}_i$, which is never observed directly and must be inferred from
self-reports and dialogue. The information asymmetry is observational
rather than distributional: the framework learns to extract preference
signals from observable behaviour rather than modelling uncertainty over
agent type distributions.

\begin{definition}[Allocation]
\label{def:allocation}
An allocation is a binary matrix
$X \in \{0,1\}^{|\Nagents| \times |\Items|}$ satisfying
\begin{equation}
  \sum_{i \in \Nagents} X_{ij} = 1 \qquad \forall\, j \in \Items.
  \label{eq:feasibility}
\end{equation}
The set of all feasible allocations is denoted $\Alloc$.
\end{definition}

\begin{definition}[Agent utility]
\label{def:utility}
The utility of agent $i$ under allocation $X$ is
\begin{equation}
  u_i(X) = \sum_{j \in \Items} X_{ij}\,v_{ij} \in [0,1].
  \label{eq:utility}
\end{equation}
\end{definition}

The agents communicate through a negotiation dialogue $D$ comprising
natural language utterances exchanged prior to agreement. These
utterances carry implicit signals about priorities, concession
willingness, and reservation utilities that supplement the declared
value matrix $\bv{V}$.

\subsection{Normative Criteria}
\label{subsec:criteria}

Four normative criteria are defined below. Individual rationality and
Pareto efficiency are standard in cooperative bargaining theory.
Security proximity and equitability are operational additions specific
to the discrete multi-issue setting.

\begin{definition}[Individual Rationality]
\label{def:ir}
An allocation $X$ is individually rational if
\begin{equation}
  u_i(X) > b_i \quad \forall\, i \in \Nagents,
  \label{eq:ir}
\end{equation}
where $b_i \in [0,1]$ is agent $i$'s reservation utility, the payoff
from its best outside option. The IR rate of a recommendation policy is
$\mathrm{IR} = \mathbb{E}_X[\prod_i \mathbf{1}[u_i(X) > b_i]]$.
\end{definition}

\begin{remark}[Continuous versus discrete IR]
\label{rem:ir_continuous}
CGD generates recommendations in the continuous utility space
$\R^{|\Nagents|}$, where IR-satisfying vectors always exist. The scalar
utility $\hat{u}_i$ of the continuous recommendation substitutes for
$u_i(X)$ during guidance and evaluation. The projection step mapping
the continuous recommendation to the nearest feasible discrete
allocation may reduce utility and produce a discrete allocation that
violates IR even when the continuous recommendation does not. Under
consistent zero Best Alternative to a Negotiated Agreement (BATNA) evaluation conditions this projection gap is
empirically negligible~(Section~\ref{sec:eval}). The operational advantage of the continuous relaxation is
that CGD always produces a recommendation, whereas a discrete solver
that finds $\AllocIR = \emptyset$ produces no output.
\end{remark}

\begin{definition}[Security Level and Proximity]
\label{def:security}
Agent $i$'s security level is the utility obtained by retaining the
$\lceil J/2\rceil$ most-valued issues:
\begin{equation}
  \sigma_i = \sum_{j \in T_i} v_{ij}, \quad
  T_i = \mathrm{top}_{\lceil J/2\rceil}(\bv{v}_i).
  \label{eq:security}
\end{equation}
An allocation is security-proximate if
\begin{equation}
  \Delta_\sigma(X) = \sum_{i \in \Nagents}
  \max(\sigma_i - u_i(X),\, 0) \leq \varepsilon
  \label{eq:sec_gap}
\end{equation}
for tolerance $\varepsilon \geq 0$.
\end{definition}

\begin{remark}[Relation to the Nash Bargaining Solution]
\label{rem:nbs}
The classical NBS~\citep{Nash1950} maximizes the Nash product $\prod_i (u_i - d_i)$ 
over a convex feasible set given an exogenous disagreement point $d$. 
Definition~\ref{def:security} departs from this paradigm in two respects. 
First, the security level $\sigma_i$ is endogenous: rather than an exogenous payoff 
or a minimax value of the discrete game, it is a proportional-share reference 
derived directly from the value matrix $\bv{V}$. The $\lceil J/2\rceil$ norm establishes 
a procedural fairness standard rather than a strict strategic guarantee. Second, 
because $\sigma_i$ is computable entirely from self-reported values, it imposes no 
additional elicitation burden. In settings where the true disagreement point $d_i$ 
is known, it naturally supersedes both $b_i$ in Definition~\ref{def:ir} and $\sigma_i$ 
in Definition~\ref{def:security}. Crucially, because the NBS optimizes subject 
only to individual rationality, it provides no structural mechanism to enforce 
security proximity. Empirical evaluations (Section~\ref{sec:eval}) confirm this 
limitation, justifying security proximity as an analytically distinct normative 
requirement.
\end{remark}

\begin{definition}[Equitability]
\label{def:equitability}
An allocation $X$ is equitable if
\begin{equation}
    \Delta_{\mathrm{eq}}(X) = \bigl|u_1(X) - u_2(X)\bigr| \leq \delta
    \label{eq:equitability}
\end{equation}
for tolerance $\delta \geq 0$.
\end{definition}

\begin{remark}[Relation to Nash's symmetry axiom]
\label{rem:nash_symmetry}
Nash's symmetry axiom~\citep{Nash1950} states that if the feasible
utility set is symmetric about the diagonal and disagreement payoffs
are equal, then the Nash solution assigns equal utilities. This is a
conditional property of the solution concept, not a constraint on
individual allocations. Definition~\ref{def:equitability} is stronger:
it requires bounded inequality for any allocation regardless of whether
the feasible set is symmetric. The NSS has a duty to both parties and
cannot condition equitability on structural properties that may not hold
in practice.
\end{remark}

\begin{remark}[Normalized equitability metric and threshold]
\label{rem:sym_normalisation}
Definition~\ref{def:equitability} bounds the raw difference
$|u_1 - u_2|$, which is not commensurable across negotiations with
different joint welfare levels. The normalized form
$|u_1 - u_2|\,/\,(u_1 + u_2 + \varepsilon)$ is scale-invariant and
aligns with the Nash axiom of invariance to positive affine utility
transformations~\citep{Nash1950}. It is applied uniformly to all
methods throughout this paper. A threshold of $\Delta_{\mathrm{eq}}
\leq 0.15$ is used operationally. This value is not derived from a
game-theoretic axiom. It represents a practitioner tolerance requiring
that no agent receives less than approximately 43\% of joint welfare,
and is substantially stricter than the random allocation baseline
($\Delta_{\mathrm{eq}} \approx 0.55$) and the greedy efficient
allocation ($\Delta_{\mathrm{eq}} \approx 0.43$). The threshold is
the tightest round value under which \textsc{CGD} satisfies
equitability across all evaluated corpora while \textsc{NBS} violates
it on two of three, making it the smallest value that meaningfully
discriminates between the two systems; sensitivity to this choice is
verifiable from the symmetry gap columns of
Table~\ref{tab:sota}.\footnote{Reading
Table~\ref{tab:sota} at three thresholds: at $\delta =
0.10$, \textsc{CGD} passes on all three corpora (symmetry gaps
$0.099$, $0.021$, $0.006$) while \textsc{NBS} fails on all three
($0.182$, $0.283$, $0.136$); at $\delta = 0.15$, \textsc{CGD}
continues to pass on all three while \textsc{NBS} passes only on
\textsc{DND} ($0.136$); at $\delta = 0.20$ the discrimination pattern
is unchanged. \textsc{CGD} compliance holds across all three
thresholds, confirming the result is not an artefact of the chosen
value.} The threshold is adjustable at inference time through
$w_{\mathrm{eq}}$ without retraining.
\end{remark}

\begin{definition}[Pareto Efficiency]
\label{def:pareto}
An allocation $X$ is Pareto-efficient if no alternative $X' \in \Alloc$
improves one agent's utility without reducing the other's. The Pareto
gap is
\begin{equation}
  \Delta_{\mathrm{P}}(X) =
  \frac{W^* - (u_1(X)+u_2(X))}{W^*} \in [0,1],
  \label{eq:pareto}
\end{equation}
where $W^* = \max_{X' \in \Alloc}\{u_1(X')+u_2(X')\}$.
\end{definition}

\subsection{Criterion Tensions and the Support Problem}
\label{subsec:tensions}

The four criteria are mutually inconsistent in general. Three tensions
arise routinely in the multi-issue setting.

\textbf{Tension 1: IR versus security-proximity.} When both agents hold
high reservation utilities $b_i$ and high security levels $\sigma_i$,
the IR-feasible set
$\AllocIR = \{X \in \Alloc : u_i(X) > b_i\ \forall\, i\}$
may be empty:
\begin{equation}
  \max_{X \in \Alloc}\,\min_{i \in \Nagents}
  \bigl(u_i(X) - b_i\bigr) \leq 0.
  \label{eq:infeasibility}
\end{equation}
The NBS is undefined in this case. The practitioner faces a negotiation
in which no allocation is acceptable to both parties, precisely when
support is most needed.

\textbf{Tension 2: Pareto efficiency versus equitability.} The
allocation maximizing joint surplus assigns each issue to the agent that
values it most. When one agent's values dominate across all issues, this
allocation gives the dominant agent near-total utility and the other
near-zero utility. Equitability therefore conflicts with Pareto
efficiency whenever agents have heterogeneous value profiles, which is
the generic case.

\textbf{Tension 3: Security-proximity versus equitability.} When the
top-$\lceil J/2 \rceil$ item sets $T_1$ and $T_2$ overlap
substantially, assigning contested items to one agent necessarily
reduces the other below its security level. Even when a jointly
security-proximate allocation exists, it may be asymmetric when
$\sigma_1 \neq \sigma_2$.

Relaxing any one criterion has consequences for the parties involved.
This motivates a formulation that imposes all four constraints
simultaneously with practitioner-adjustable tolerances.

\begin{definition}[Negotiation Decision Support Problem]
\label{def:ndsp}
Given $\bv{V}$, $\bv{b} = (b_1, b_2)$,
$\bv{\sigma} = (\sigma_1, \sigma_2)$,
tolerances $(\varepsilon, \delta, \eta) \geq 0$, and dialogue $D$,
the NDSP is to find $X^*$ satisfying:
\begin{alignat}{2}
  &\text{(IR)} \quad  & u_i(X^*)                  &> b_i
    \quad\forall\,i, \label{eq:c_ir} \\
  &\text{(Sec)} \quad & \Delta_\sigma(X^*)         &\leq \varepsilon,
    \label{eq:c_sec} \\
  &\text{(Eq)} \quad  & \Delta_{\mathrm{eq}}(X^*)  &\leq \delta,
    \label{eq:c_eq} \\
  &\text{(Par)} \quad & \Delta_{\mathrm{P}}(X^*)   &\leq \eta.
    \label{eq:c_par}
\end{alignat}
\end{definition}

\subsection{Structural Limitations}
\label{subsec:limitations}

Three structural limitations prevent the NDSP from being solved by
existing approaches in general cases.

\textbf{L1: Discrete infeasibility.} Any framework searching over
$\Alloc$ fails to produce a recommendation when
$\AllocIR = \emptyset$ under condition~\eqref{eq:infeasibility}. A
framework operating in the continuous utility space $\R^{|\Nagents|}$
is not subject to this constraint, as IR-satisfying vectors always
exist in the continuous relaxation. The nearest feasible discrete
allocation is recovered in post-processing. The operational advantage
is that CGD always produces a recommendation, whereas a discrete solver
fails entirely when $\AllocIR = \emptyset$. If the discrete allocation
space genuinely contains no IR-feasible allocation, the projection step
cannot recover one. The continuous guidance minimizes the residual IR
violation before projection. The magnitude of this residual gap is
quantified empirically in Section~\ref{sec:eval}.

\textbf{L2: Preference elicitation gap.} Optimization-based NSSs
condition on the declared value matrix $\bv{V}$, which is an imperfect
proxy for true preferences. The negotiation dialogue $D$ contains
complementary signals about issue salience and concession limits absent
from $\bv{V}$. Realising this capacity requires two separable
conditions to hold: the corpus must carry recoverable preference signals
beyond $\bv{V}$, and the encoder must have sufficient semantic
competence to extract them. Both conditions are tested empirically in
Section~\ref{sec:eval} through a three-way architectural ablation and a
controlled misrepresentation experiment.

\textbf{L3: Exponential search complexity.} Exhaustive evaluation of
all $|\Alloc| = |\Nagents|^{|\Items|}$ allocations grows exponentially
in $|\Items|$. Two computational steps must be distinguished. The
denoising inference step operates in $\R^{|\Nagents|}$ and can be made
independent of $|\Alloc|$, scaling only with the model size and number
of diffusion steps. The post-processing step, which maps the continuous
recommendation to the nearest feasible discrete allocation, requires a
search over $\Alloc$ and retains the exponential dependence on
$|\Items|$. The continuous relaxation therefore shifts the exponential
bottleneck from the optimization phase to the post-processing phase
rather than eliminating it. Empirical timing results in
Section~\ref{sec:eval} quantify both components across
$|\Items| \in \{3, 5, 7, 10\}$, confirming constant denoising time
and exponentially growing NBS enumeration time across all evaluated
corpora. The crossover point at which NBS enumeration time exceeds CGD
denoising time is estimated at approximately $|\Items| \approx 17$.

\subsection{Research Objective}
\label{subsec:objective}

Limitations L1 through L3 are structural properties of the discrete
allocation formulation rather than deficiencies of particular
algorithms. Addressing all three simultaneously requires generating
recommendations in continuous utility space, conditioning jointly on
structured preference data and dialogue, and producing recommendations
at inference cost independent of $|\Alloc|$.

The CGD framework is designed to satisfy these requirements through
four operational objectives.

\begin{enumerate}[label=\textbf{(O\arabic*)}, leftmargin=3.5em,
                  itemsep=3pt]
  \sloppy

  \item \textbf{IR-completeness.}
    Recommendations satisfy individual rationality by operating in the
    continuous utility relaxation and applying iterative normative
    corrections during generation, circumventing the discrete
    infeasibility condition (Equation~\eqref{eq:infeasibility}). Discrete IR
    compliance under consistent evaluation conditions is empirically
    confirmed across all corpora.

  \item \textbf{Contextual grounding.}
    Recommendations are conditioned on graph-structured strategic
    context and negotiation dialogue, incorporating both declared and
    dialogue-revealed preference signals where recoverable. The
    conditions under which dialogue signals are recoverable are
    identified empirically in Section~\ref{sec:eval}.

  \item \textbf{Welfare recoverability.}
    The inference pipeline supports optional welfare guidance that
    recovers Pareto-efficient discrete allocations without model
    retraining, while preserving all normative constraints enforced by
    the primary guidance gradient.

  \item \textbf{Complexity-bounded inference.}
    The denoising inference cost is $O(T \cdot |\theta|)$, independent
    of $|\Alloc| = |\Nagents|^{|\Items|}$. This architectural
    decoupling is empirically confirmed across multiple issue counts
    and all evaluated corpora.

\end{enumerate}

Whether and to what degree these properties are achieved is established
empirically in Section~\ref{sec:eval}.

\section{Methodology}
\label{sec:method}

The three limitations identified in Section~\ref{sec:problem} stem from
two structural constraints: negotiation outcomes are represented as
discrete allocations $X \in \Alloc$, and recommendations are conditioned
solely on the declared value matrix $\bv{V}$. Limitation~L1 emerges
because the finite set $\Alloc$ may contain no IR-feasible allocation.
Limitation~L2 arises because $\bv{V}$ is an incomplete proxy for true
preferences when the dialogue $D$ carries implicit preference signals.
Limitation~L3 restricts scalability because any algorithm enumerating
$\Alloc$ incurs costs that grow exponentially in the number of issues.

Addressing all three requires generating recommendations in the
continuous utility space $\R^{|\Nagents|}$, conditioning them jointly on
$\bv{V}$ and $D$, and ensuring that the denoising inference cost is
independent of $|\Alloc|$. The CGD framework is designed to satisfy
these requirements. Each architectural component is detailed below, and
Section~\ref{sec:theory} establishes the corresponding theoretical
properties. The operational scope of each theoretical result and its
relationship to the empirical findings are discussed in
Section~\ref{sec:discussion}.

\subsection{Architectural Overview}
\label{subsec:overview}

CGD processes two inputs to produce a single output. The first is the
strategic context $\mathcal{G} = (\bv{V},\bv{b},\bv{\sigma},\bv{r})$,
comprising the value matrix, reservation utilities, security levels, and
role indicators. The second is the dialogue context $D$, comprising the
natural language utterances exchanged by the negotiating parties.

These inputs are mapped to a joint utility tensor
$\hat{\bv{u}} \in \R^{|\Nagents| \times d_u}$. The denoiser operates in
a $d_u$-dimensional latent space per agent. The final scalar utility
recommendation for agent $i$ is extracted from the first component,
$\hat{u}_i \equiv \hat{u}_{i,1}$. The remaining $d_u - 1$ dimensions
serve as auxiliary latent capacity during generation and are discarded
prior to evaluation. All normative criteria and metrics are evaluated on
the clamped value $\hat{u}_i \in [0,1]$.

The transformation proceeds through four stages, illustrated in
Figure~\ref{fig:architecture}. A graph attention encoder maps
$\mathcal{G}$ into a graph-structured embedding $\bv{H}^{\mathrm{strat}}$
capturing relational dependencies between agents. In parallel, a
dialogue encoder processes $D$ into token-level representations
$\bv{H}^{\mathrm{txt}}$. A cross-attention layer fuses these embeddings
into a unified conditioning context $\bv{h}^{\mathrm{ctx}}$, which is
injected into every layer of the denoiser via feature-wise modulation.
A U-Net denoiser then iteratively transforms Gaussian noise $\bv{u}_T$
into the recommendation $\hat{\bv{u}}_0$ via DDPM reverse diffusion.
During this reverse process, normative guidance corrections align the
generated output with individual rationality, security proximity, and
equitability.

\begin{figure}[ht]
  \centering
  \resizebox{\textwidth}{!}{\begin{tikzpicture}[
    box/.style={draw=ejorblue, fill=lightblue, rounded corners=3pt,
      minimum width=2.5cm, minimum height=0.75cm,
      font=\small\bfseries, align=center, text=ejorblue, inner sep=4pt},
    rbox/.style={draw=warningred, fill=lightyellow, rounded corners=3pt,
      minimum width=3.2cm, minimum height=0.7cm,
      font=\small\bfseries, align=center, text=warningred,
      inner sep=4pt, dashed},
    gbox/.style={draw=ejorblue!70, fill=lightgreen, rounded corners=3pt,
      minimum width=3.0cm, minimum height=1.0cm,
      font=\small\bfseries, align=center,
      text=ejorblue!80!black, inner sep=4pt},
    arr/.style={-{Stealth[length=5pt]}, thick, color=ejorblue},
    rarr/.style={-{Stealth[length=5pt]}, thick, color=warningred,
                 dashed},
    lbl/.style={font=\scriptsize\itshape, color=darkgray},
    node distance=0.35cm and 0.9cm
  ]
    \node[box] (state) at (0,2.2)
      {Agent State\\{\footnotesize$\bv{V},\bv{b},\bv{\sigma},\bv{r}$}};
    \node[box, right=0.9cm of state] (gatv2)
      {GATv2\\Encoder};
    \node[box, right=0.9cm of gatv2] (hstrat)
      {$\bv{H}^{\mathrm{strat}}$};
    \node[box] (diag) at (0,0)
      {Dialogue\\Context $D$};
    \node[box, right=0.9cm of diag] (roberta)
      {Dialogue\\Encoder};
    \node[box, right=0.9cm of roberta] (htxt)
      {$\bv{H}^{\mathrm{txt}}$};
    \node[box, right=1.1cm of hstrat, yshift=-1.1cm] (ca)
      {Cross-Attention\\$\bv{h}^{\mathrm{ctx}}$};
    \node[gbox, right=1.3cm of ca] (unet)
      {U-Net Denoiser\\
       $\epsilon_\theta(\bv{u}_t,t,\bv{h}^{\mathrm{ctx}})$};
    \node[rbox, above=0.65cm of unet] (norm)
      {$\Lnorm^{\mathrm{ext}}$: IR $+$ Security\\
       $+$ Equitability $+$ Welfare};
    \node[box, right=1.1cm of unet] (out)
      {$\hat{\bv{u}}_0$};
    \node[lbl, left=0.55cm of unet] (noise) {Noise $\bv{u}_T$};

    \draw[arr] (state)   -- (gatv2);
    \draw[arr] (gatv2)   -- (hstrat);
    \draw[arr] (hstrat)  -- (ca);
    \draw[arr] (diag)    -- (roberta);
    \draw[arr] (roberta) -- (htxt);
    \draw[arr] (htxt)    -- (ca);
    \draw[arr] (ca)      -- (unet)
      node[midway,above,lbl]{condition};
    \draw[arr] (noise)   -- (unet);
    \draw[rarr](norm)    -- (unet)
      node[midway,right,lbl,text=warningred]
        {$-\gamma\nabla_{\!\bv{u}_t}\Lnorm^{\mathrm{ext}}$};
    \draw[arr] (unet)    -- (out)
      node[midway,above,lbl]{sample};

    \begin{scope}[on background layer]
      \node[draw=ejorblue!30, rounded corners=4pt, fill=softgray,
        fit=(state)(gatv2)(hstrat)(diag)(roberta)(htxt),
        label={[font=\footnotesize\bfseries,text=ejorblue]
               below:\strut Strategic \& Dialogue Encoding},
        inner sep=7pt] {};
      \node[draw=ejorblue!30, rounded corners=4pt, fill=softgray,
        fit=(ca)(unet)(out)(norm),
        label={[font=\footnotesize\bfseries,text=ejorblue]
               below:\strut Conditional Graph Diffusion},
        inner sep=7pt] {};
    \end{scope}
  \end{tikzpicture}}
  \caption{The CGD pipeline. \textit{Left}: The strategic context
  $(\bv{V},\bv{b},\bv{\sigma},\bv{r})$ and dialogue context $D$ are
  processed by a GATv2 encoder and a dialogue encoder, respectively.
  Cross-attention fuses these representations into a unified conditioning
  context $\bv{h}^{\mathrm{ctx}}$. \textit{Right}: The U-Net denoiser
  $\epsilon_\theta$ executes the DDPM reverse process conditioned on
  $\bv{h}^{\mathrm{ctx}}$. An extended normative guidance objective
  $\Lnorm^{\mathrm{ext}}$ iteratively corrects the noisy sample at each
  step (dashed arrow) without altering the denoiser weights $\theta$.}
  \label{fig:architecture}
\end{figure}

\subsection{Strategic Graph Construction via GATv2}
\label{subsec:gat}

The strategic encoder must represent not only each agent's individual
attributes but also the comparative relationship between them. In
bilateral negotiation, the signal that determines whether a mutually
beneficial trade exists is relational: agent $i$'s value for item $j$
relative to agent $k$'s value for the same item determines efficient
assignment. This relational structure motivates a graph representation.
The strategic context $\mathcal{G}$ is modeled as a directed graph
$G = (\mathcal{V},\mathcal{E})$ over agent nodes $\mathcal{V} = \Nagents$
connected by bidirectional edges. Each node $i$ is assigned the feature
vector
\begin{equation}
  \bv{s}_i =
  \bigl[\,b_i,\;\sigma_i,\;\bv{v}_i^\top,\;\bv{r}_i^\top,\;
         \bv{o}_i^\top,\;\bv{0}_{\mathrm{pad}}\,\bigr]
  \in \R^{F},
  \label{eq:node_feat}
\end{equation}
where $\bv{o}_i$ is an optional social-value-orientation encoding
(zero-padded when unavailable) and $\bv{0}_{\mathrm{pad}}$ ensures fixed
dimensionality $F$.

As established in Section~\ref{subsec:lit_gnn}, GATv2 is employed for the strategic graph encoding. For a directed edge $(k \to i)$ at attention head $h$, the unnormalized attention coefficient is computed as:
\begin{equation}
  e_{ki}^{(h)} =
  \bv{a}^{(h)\top}
  \operatorname{LeakyReLU}\!\bigl(
    \bv{W}_{\mathrm{src}}^{(h)}\bv{s}_k
    + \bv{W}_{\mathrm{dst}}^{(h)}\bv{s}_i
  \bigr).
  \label{eq:gatv2_logit}
\end{equation}
The attention weight that agent $i$ assigns to agent $k$ therefore
reflects the relative comparison of their parameters rather than
evaluating either agent's features in isolation. The final embedding is
obtained by softmax normalization and multi-head aggregation:
\begin{equation}
  \bv{h}_i =
  \operatorname{ELU}\!\Bigl(
    \bv{W}_{\mathrm{out}}
    \Bigl\|_{h=1}^{H}
    \sum_{k \in \mathcal{N}(i)}
    \alpha_{ki}^{(h)}\bv{W}_{\mathrm{dst}}^{(h)}\bv{s}_k
  \Bigr),
  \label{eq:gatv2_agg}
\end{equation}
yielding the strategic embedding matrix
$\bv{H}^{\mathrm{strat}} \in \R^{|\Nagents| \times d_{\mathrm{strat}}}$.
The choice of GATv2 over static alternatives is validated empirically in
Section~\ref{sec:eval} through an ablation comparing flat MLP, graph
convolutional network, original GAT, and GATv2 encoders on normative
compliance metrics.

\subsection{Dialogue Encoding}
\label{subsec:roberta}

Limitation~L2 arises because $\bv{V}$ is an incomplete proxy for true
preferences. The dialogue $D$ contains complementary signals about
concession intent, issue salience, and context-dependent shifts in
valuation. Whether these signals are extractable in practice depends on
two separable conditions: the dialogue must carry preference information
beyond $\bv{V}$, and the encoder must have sufficient semantic capacity
to extract it. Observation~\ref{obs:dialogue_necessity} in
Section~\ref{subsubsec:l2_theory} establishes the first condition as a
standard consequence of conditioning on informative variables. The second
is an empirical question; the ablation in
Section~\ref{subsec:dialogue_ablation} tests both conditions, and the
implications of the results are discussed in
Section~\ref{sec:discussion}.

A transformer-based dialogue encoder processes the sequence
$D = (w_1,\ldots,w_L)$ to produce token-level representations and a
pooled summary:
\begin{equation}
  \bv{H}^{\mathrm{txt}},\;\bv{h}^{\mathrm{txt}}
  = \mathrm{DialogueEnc}(D),
  \quad
  \bv{H}^{\mathrm{txt}} \in \R^{L \times d_{\mathrm{txt}}},\;
  \bv{h}^{\mathrm{txt}} \in \R^{d_{\mathrm{txt}}}.
  \label{eq:roberta_out}
\end{equation}
The token-level representations $\bv{H}^{\mathrm{txt}}$ preserve
fine-grained contextual dependencies for downstream cross-attention
fusion. The pooled summary $\bv{h}^{\mathrm{txt}}$ provides an auxiliary
conditioning signal.

\subsection{Cross-Attention Fusion}
\label{subsec:fusion}

The strategic embedding $\bv{h}^{\mathrm{strat}}$ and the dialogue token
sequence $\bv{H}^{\mathrm{txt}}$ operate at different granularities and
must be aligned before conditioning the denoiser. Cross-attention allows
the strategic summary to selectively attend to the most relevant dialogue
segments:
\begin{equation}
  \bv{h}^{\mathrm{ctx}} =
  \operatorname{softmax}\!\!\left(
    \frac{(\bv{W}_Q\bv{h}^{\mathrm{strat}})
          (\bv{W}_K\bv{H}^{\mathrm{txt}})^\top}{\sqrt{d_c}}
  \right)
  (\bv{W}_V\bv{H}^{\mathrm{txt}}),
  \label{eq:crossattn}
\end{equation}
where $\bv{W}_Q, \bv{W}_K, \bv{W}_V$ project inputs into a shared
latent space of dimension $d_c$. A residual connection to
$\bv{h}^{\mathrm{strat}}$ is applied after the attention mechanism.
The fused context $\bv{h}^{\mathrm{ctx}} \in \R^{d_c}$ is computed once
and broadcast to every denoiser layer via feature-wise modulation. This
ensures that conditioning cost remains constant across all $T$
reverse-diffusion steps.

\subsection{Denoising Diffusion Probabilistic Model (DDPM)}
\label{subsec:ddpm}

With $\bv{h}^{\mathrm{ctx}}$ encoding the negotiation context, the
objective is to generate a utility vector satisfying the NDSP criteria.
A diffusion model is suited to this task because it parameterizes a
flexible distribution over utility vectors, supports arbitrary
conditioning, and admits gradient-based corrections at inference time
without modifying the underlying model weights. 

\subsubsection{Forward Process}
\label{subsubsec:forward}

Let $\bv{u}_0 \in \R^{|\Nagents| \times d_u}$ be the target utility
vector drawn from the training distribution. The forward process corrupts
$\bv{u}_0$ over $T$ steps via a Gaussian Markov chain with closed-form
marginal:
\begin{equation}
  q(\bv{u}_t \mid \bv{u}_0) =
  \mathcal{N}\!\bigl(
    \bv{u}_t;\,
    \sqrt{\bar{\alpha}_t}\,\bv{u}_0,\,
    (1-\bar{\alpha}_t)\bv{I}
  \bigr),
  \label{eq:forward}
\end{equation}
where $\alpha_t = 1-\beta_t$,
$\bar{\alpha}_t = \prod_{s=1}^{t}\alpha_s$, and $\{\beta_t\}_{t=1}^{T}$
is a monotonically increasing noise schedule detailed in
Section~\ref{sec:implementation}.

\subsubsection{Denoiser: U-Net with FiLM Conditioning}
\label{subsubsec:denoiser}

The denoiser $\epsilon_\theta(\bv{u}_t,t,\bv{h}^{\mathrm{ctx}})$ is
implemented as a U-Net that estimates the noise added at step $t$,
conditioned on $\bv{h}^{\mathrm{ctx}}$ and the current time step $t$.
The encoder-decoder structure uses feedforward layers with skip
connections. A self-attention layer at the bottleneck across the agent
dimension allows each agent's representation to attend to the other,
capturing bilateral interactions. Skip connections preserve per-agent
feature asymmetry, accommodating the equity tensions identified in
Section~\ref{sec:problem}.

The conditioning signal is injected into every layer via Feature-wise
Linear Modulation (FiLM)~\citep{Perez2018}. Let
$\bv{C}^{(\ell)} = [\bv{h}^{\mathrm{ctx}}, \bv{e}_t]$ denote the
concatenation of the fused context and a sinusoidal time-step embedding
$\bv{e}_t$. A small MLP predicts affine parameters applied as:
\begin{equation}
  \widetilde{\bv{z}}_i^{(\ell)} =
  \bv{\gamma}^{(\ell)} \odot
  \operatorname{LayerNorm}(\bv{z}_i^{(\ell)})
  + \bv{\beta}^{(\ell)},
  \label{eq:film}
\end{equation}
where $(\bv{\gamma}^{(\ell)},\bv{\beta}^{(\ell)})$ is the MLP output.
Because $\bv{h}^{\mathrm{ctx}}$ remains constant across all $T$ reverse
steps, the FiLM parameters vary only with $t$, and inference cost scales
strictly as $O(T \cdot |\theta|)$.

\subsubsection{Training Objective}
\label{subsubsec:loss}

The model minimizes a composite loss combining DDPM reconstruction error
with a normative penalty:
\begin{equation}
  \mathcal{L}(\theta) =
  \underbrace{
    \E_{t,\bv{u}_0,\bv{\epsilon}}\!\bigl[
      \norm{\bv{\epsilon}
            - \epsilon_\theta(\bv{u}_t,t,\bv{h}^{\mathrm{ctx}})}^2
    \bigr]
  }_{\Ldiff}
  +
  \lambda\,\E_{\bv{u}_0}\!\bigl[\Lnorm(\bv{u}_0)\bigr],
  \label{eq:loss}
\end{equation}
where $\lambda > 0$ regulates the normative penalty $\Lnorm$
(Section~\ref{subsec:guidance}). This penalty regularizes the denoiser
during training, encouraging alignment with IR, security, and
equitability constraints prior to inference-time guidance. 
\subsection{Normative Guidance}
\label{subsec:guidance}

A trained diffusion model approximates the utility distribution of the
training corpus. Because real-world negotiation data often include
outcomes that violate IR, fall below security thresholds, or exhibit
significant utility asymmetry, normative guidance addresses these
discrepancies at inference time. Constraint-derived gradients are
injected into each reverse step, steering the generation process without
altering the pre-trained model weights.

\subsubsection{Normative Loss}
\label{subsubsec:norm_loss}

The normative loss formulates the three primary NDSP constraints as a
weighted sum of convex penalty terms:
\begin{equation}
  \Lnorm(\hat{\bv{u}}) =
  \underbrace{
    \sum_{i \in \Nagents} \relu{b_i - \hat{u}_{i,1}}
  }_{\mathcal{L}_{\mathrm{IR}}}
  +
  \underbrace{
    \sum_{i \in \Nagents} \relu{\sigma_i - \hat{u}_{i,1}}
  }_{\mathcal{L}_{\mathrm{sec}}}
  +
  \underbrace{
    w_{\mathrm{eq}}\,\bigl|\hat{u}_{1,1} - \hat{u}_{2,1}\bigr|
  }_{\mathcal{L}_{\mathrm{eq}}},
  \label{eq:norm_loss}
\end{equation}
where $w_{\mathrm{eq}} \geq 0$ controls the equitability penalty
strength. The three terms enforce Individual Rationality
(Definition~\ref{def:ir}), Security-Proximity
(Definition~\ref{def:security}), and Equitability
(Definition~\ref{def:equitability}) respectively. Pareto efficiency
is not explicitly penalized; it is pursued indirectly as the diffusion
model learns the distribution of successful negotiated outcomes. All
three terms are convex, non-negative, and subgradient-computable, and
$\Lnorm = 0$ is achieved if and only if both agents' utilities meet
their respective thresholds and are perfectly equal.

\subsubsection{Guided Reverse Process}
\label{subsubsec:reverse}

During inference, the standard DDPM update at each reverse step $t$ is
augmented by a gradient correction derived from an extended normative
loss $\Lnorm^{\mathrm{ext}}$:
\begin{equation}
  \bv{u}_{t-1} \leftarrow
  \underbrace{
    \frac{1}{\sqrt{\alpha_t}}\!\left(
      \bv{u}_t
      - \frac{1-\alpha_t}{\sqrt{1-\bar{\alpha}_t}}
        \epsilon_\theta(\bv{u}_t,t,\bv{h}^{\mathrm{ctx}})
    \right)
  }_{\bv{u}_{t-1}^{\mathrm{raw}}}
  -
  \gamma\,\nabla_{\bv{u}_t}\Lnorm^{\mathrm{ext}}(\bv{u}_t),
  \label{eq:guided_step}
\end{equation}
where $\gamma \geq 0$ is the guidance scale. This correction shifts the
sample trajectory while leaving model weights unchanged. The gradient is
computed by detaching $\bv{u}_t$ from the computational graph and
calling backward on $\Lnorm^{\mathrm{ext}}$ explicitly, ensuring that
the gradient step is not suppressed by the surrounding inference context.

The extended loss $\Lnorm^{\mathrm{ext}}$ augments
Equation~\eqref{eq:norm_loss} with a welfare guidance term weighted by
$w_{\mathrm{welfare}} \geq 0$. Its subgradient with respect to agent
$i$'s scalar utility is:
\begin{equation}
  \frac{\partial\Lnorm^{\mathrm{ext}}}{\partial \hat{u}_{i,1}}
  =
  -\ind{\hat{u}_{i,1} < b_i}
  -\ind{\hat{u}_{i,1} < \sigma_i}
  +w_{\mathrm{eq}}\operatorname{sign}(\hat{u}_{i,1}-\hat{u}_{j,1})
  -w_{\mathrm{welfare}}\bigl(1-\hat{u}_{i,1}\bigr)^{+},
  \label{eq:grad_lnorm}
\end{equation}
where $j \neq i$ and $(1-\hat{u}_{i,1})^{+}$ is agent $i$'s remaining
utility headroom. The welfare term pushes each agent's utility upward
proportionally to available headroom, preserving natural asymmetries
between agents at different utility levels. This separation of normative
and welfare objectives within the same inference pipeline constitutes
the operational basis of Contribution~C2: normative constraints are
enforced unconditionally by the primary gradient, and welfare is then
maximized subject to those constraints by the welfare term.

To guarantee that the IR correction remains dominant whenever an IR
violation occurs, the combined weights must satisfy:
\begin{equation}
  w_{\mathrm{eq}} + w_{\mathrm{welfare}} < 1.
  \label{eq:welfare_safety}
\end{equation}
When $\hat{u}_{i,1} < b_i$, both hinge indicators contribute $-1$
while the equitability and welfare terms contribute at most
$+w_{\mathrm{eq}}$ and $+w_{\mathrm{welfare}}$ respectively, yielding
$\partial\Lnorm^{\mathrm{ext}}/\partial\hat{u}_{i,1} \leq
-1 + w_{\mathrm{eq}} + w_{\mathrm{welfare}}$, which is strictly negative
if and only if Equation~\eqref{eq:welfare_safety} holds. This condition
bounds the valid hyperparameter space for $w_{\mathrm{welfare}}$ given
any chosen $w_{\mathrm{eq}}$. The theoretical consequence is established
in Section~\ref{subsubsec:l1_theory}.

\subsubsection{BATNA Estimation Strategies}
\label{subsubsec:batna}

The parameters $b_i$ and $\sigma_i$ in Equation~\eqref{eq:grad_lnorm}
are derived from $\bv{V}$ at inference time. Three strategies for
estimating the reservation utility $b_i$ are evaluated:
\begin{itemize}
  \item \textbf{Zero} ($b_i = 0$): Any positive utility satisfies IR,
    providing a conservative lower bound.
  \item \textbf{Floor} ($b_i \propto \max_j v_{ij}$): BATNA
    proportional to the agent's highest-valued issue.
  \item \textbf{Dataset} ($b_i = \hat{F}_i^{-1}(0.1)$): The 10th
    percentile of agent $i$'s realized utilities in the training corpus,
    requiring access to historical outcome data.
\end{itemize}
Security levels $\sigma_i$ are computed deterministically as the sum of
agent $i$'s top-$\lceil J/2 \rceil$ item values and require no
estimation. 

\subsection{Post-Processing: Recovery of Feasible Allocations}
\label{subsec:postproc}

The NDSP formally requires a discrete allocation $X^* \in \Alloc$. CGD
generates recommendations in $\R^{|\Nagents|}$ and recovers the final
allocation by mapping the continuous recommendation to the closest
discrete allocation in utility space:
\begin{equation}
  X^* = \argmin_{X \in \Alloc}\;
  \bigl\|\bv{u}(X) - \hat{\bv{u}}_0\bigr\|^2,
  \label{eq:postproc}
\end{equation}
where $\hat{\bv{u}}_0 = (\hat{u}_1,\hat{u}_2)$ is the clamped
continuous recommendation and $\bv{u}(X)$ is the utility vector
realized by allocation $X$. The search over $\Alloc$ is exhaustive.
Its cost grows exponentially in $|\Items|$ and is the computational
bottleneck at large item counts. Empirical timing results in
Section~\ref{sec:eval} confirm that post-processing time grows
exponentially across $|\Items| \in \{3,5,7,10\}$ while denoising time
remains constant.

A generalized weighted objective
$\argmin_{X \in \Alloc}\,\alpha\|\bv{u}(X)-\hat{\bv{u}}_0\|^2
-(1-\alpha)(u_1(X)+u_2(X))$
for $\alpha \in [0,1]$ interpolates between strict nearest-allocation
search ($\alpha=1$) and greedy maximum-welfare search ($\alpha=0$).
Empirical evaluation in Section~\ref{sec:welfare_results} confirms that
$\alpha^* = 1.0$ universally across all corpora.

Two welfare metrics are distinguished throughout. The continuous welfare
$W_{\mathrm{cont}} = \mathbb{E}[\hat{u}_1+\hat{u}_2]$ measures
recommendation quality prior to post-processing. The discrete welfare
$W_{\mathrm{disc}} = \mathbb{E}[u_1(X^*)+u_2(X^*)]$ measures actual
utility after allocation recovery. Because all baseline methods operate
exclusively in $\Alloc$, comparative evaluations use $W_{\mathrm{disc}}$
throughout.

\subsection{Theoretical Analysis}
\label{sec:theory}

The three structural limitations are addressed through different
analytical mechanisms. Limitation~L3 is circumvented by the
architectural design. Limitation~L1 is mitigated through a bounded
guidance scale. Limitation~L2 is addressed by expanding representational
capacity. Each is formalized below.

\subsubsection{L1: Discrete Infeasibility}
\label{subsubsec:l1_theory}

\begin{definition}[Per-step denoising magnitude]
\label{def:delta}
The per-step denoising contribution to agent $i$'s scalar utility at
reverse step $t$ is:
\begin{equation}
  \delta_t^{(i)} =
  \frac{1-\alpha_t}{\sqrt{\alpha_t(1-\bar{\alpha}_t)}}\,
  \epsilon_\theta^{(i)}(\bv{u}_t,t,\bv{h}^{\mathrm{ctx}}),
  \label{eq:delta_def}
\end{equation}
where $\epsilon_\theta^{(i)}$ is the $i$-th agent's noise component.
Let $\Delta^{(i)} = \sup_{t,\bv{u}_t,\bv{h}^{\mathrm{ctx}}}
|\delta_t^{(i)}|$.
\end{definition}

\begin{theorem}[Monotone per-step IR correction under guidance]
\label{thm:ir_guarantee}
Suppose $w_{\mathrm{eq}} + w_{\mathrm{welfare}} < 1$. If
$\gamma \geq \gamma^* := \Delta^{(i)}/(1-w_{\mathrm{eq}}
-w_{\mathrm{welfare}})$ for all $i \in \Nagents$, then at every reverse
step $t$ where $\hat{u}_{i,1}^{(t)} < b_i$:
\begin{equation}
  \hat{u}_{i,1}^{(t-1)} > \hat{u}_{i,1}^{(t-1),\,\mathrm{raw}}.
  \label{eq:mono}
\end{equation}
\end{theorem}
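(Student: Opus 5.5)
The proof reads off the first component of agent $i$ in the guided update \eqref{eq:guided_step}. At step $t$ the guided iterate and the raw DDPM iterate $\bv{u}_{t-1}^{\mathrm{raw}}$ differ only by the guidance term. The $(i,1)$ component therefore satisfies
\begin{equation*}
  \hat{u}_{i,1}^{(t-1)} = \hat{u}_{i,1}^{(t-1),\mathrm{raw}} - \gamma\,\frac{\partial \Lnorm^{\mathrm{ext}}}{\partial \hat{u}_{i,1}}\bigl(\bv{u}_t\bigr).
\end{equation*}
The claim \eqref{eq:mono} thus reduces to showing that the subgradient in \eqref{eq:grad_lnorm}, evaluated at $\bv{u}_t$, is strictly negative, with magnitude at least $1-w_{\mathrm{eq}}-w_{\mathrm{welfare}}$, whenever $\hat{u}_{i,1}^{(t)}<b_i$.

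The key step is a term-by-term bound on \eqref{eq:grad_lnorm} under $\hat{u}_{i,1}^{(t)}<b_i$.
\begin{itemize}
  \item The IR indicator contributes exactly $-1$.
  \item The security indicator contributes $-1$ or $0$. It is always $-1$ if $b_i\le\sigma_i$, but I only need that it is $\le 0$.
  \item The equitability term contributes at most $w_{\mathrm{eq}}$, since $|\operatorname{sign}|\le 1$ and a subgradient choice at a tie also lies in $[-1,1]$.
  \item The welfare term $-w_{\mathrm{welfare}}(1-\hat{u}_{i,1})^{+}$ is $\le 0$. This is sharper than the paper's slack bound of $+w_{\mathrm{welfare}}$, and either bound suffices.
\end{itemize}
Summing the four bounds gives $\partial\Lnorm^{\mathrm{ext}}/\partial\hat{u}_{i,1}\le -1+w_{\mathrm{eq}}+w_{\mathrm{welfare}}<0$, which is exactly the computation recorded after \eqref{eq:welfare_safety}. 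Multiplying by $\gamma\ge\gamma^*$ shows that the guidance term contributes an increment of at least
\begin{equation*}
  \gamma\,(1-w_{\mathrm{eq}}-w_{\mathrm{welfare}}) \;\ge\; \Delta^{(i)} \;\ge\; 0 .
\end{equation*}
This increment is strictly positive as soon as $\gamma>0$.

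The main obstacle is the interpretation of the threshold $\gamma^*$. Taken literally, the comparison with the raw iterate needs only $\gamma>0$ and the sign bound; $\Delta^{(i)}$ from Definition~\ref{def:delta} plays no role. I would therefore also prove the stronger statement that $\gamma^*$ is designed for:
\begin{equation*}
  \hat{u}_{i,1}^{(t-1)} \;\ge\; \hat{u}_{i,1}^{(t)}/\sqrt{\alpha_t}\;-\;\delta_t^{(i)}\;+\;\gamma(1-w_{\mathrm{eq}}-w_{\mathrm{welfare}}).
\end{equation*}
This follows by expanding the raw update through \eqref{eq:delta_def}. Since $|\delta_t^{(i)}|\le\Delta^{(i)}\le\gamma(1-w_{\mathrm{eq}}-w_{\mathrm{welfare}})$, the guidance increment dominates the worst-case denoising pull. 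Consequently the corrected utility is at least the rescaled previous value $\hat{u}_{i,1}^{(t)}/\sqrt{\alpha_t}$, strictly so when $\gamma>\gamma^*$ or when $|\delta_t^{(i)}|<\Delta^{(i)}$. This inequality also implies \eqref{eq:mono}.

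Two technical points need care. First, the IR hinge has a kink at $\hat{u}_{i,1}=b_i$, but the hypothesis is a strict inequality, so the indicator is unambiguous. Second, the clamping to $[0,1]$ is applied only after sampling, so it does not interfere with the per-step statement. I would state both explicitly. I would also note that finiteness of $\Delta^{(i)}$, needed for $\gamma^*<\infty$, is an assumption on the trained network, for example bounded outputs.
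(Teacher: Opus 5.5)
Your proposal is correct and follows the paper's proof. As the paper does, you read off the $(i,1)$ component of the guided update \eqref{eq:guided_step}, bound the subgradient \eqref{eq:grad_lnorm} at $\bv{u}_t$ by $-1+w_{\mathrm{eq}}+w_{\mathrm{welfare}}<0$, and multiply by $\gamma$. Your refinements are also sound and a little sharper than the paper, which asserts that both hinge indicators equal $-1$ and bounds the welfare term only by $+w_{\mathrm{welfare}}$. Strictness uses only $\gamma>0$, so $\gamma^*$ plays no role in the literal statement. In fact, if every $\Delta^{(i)}$ vanishes, $\gamma=0$ is admissible and the strict inequality fails; the paper's proof shares this gap. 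The threshold $\gamma^*$ genuinely matters only for your stronger bound $\hat{u}_{i,1}^{(t-1)}\ge\hat{u}_{i,1}^{(t)}/\sqrt{\alpha_t}$.
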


\begin{proof}
When $\hat{u}_{i,1}^{(t)} < b_i$, both hinge indicators in
Equation~\eqref{eq:grad_lnorm} contribute $-1$. The equitability term
contributes at most $+w_{\mathrm{eq}}$ and the welfare term at most
$+w_{\mathrm{welfare}}$. Therefore:
\begin{equation}
  \frac{\partial\Lnorm^{\mathrm{ext}}}{\partial\hat{u}_{i,1}}
  \leq -1 + w_{\mathrm{eq}} + w_{\mathrm{welfare}} < 0,
\end{equation}
where strict negativity follows from
Equation~\eqref{eq:welfare_safety}. Applying the guidance correction:
\begin{equation}
  \hat{u}_{i,1}^{(t-1)}
  = \hat{u}_{i,1}^{(t-1),\mathrm{raw}}
  - \gamma\,\frac{\partial\Lnorm^{\mathrm{ext}}}{\partial\hat{u}_{i,1}}
  \geq
  \hat{u}_{i,1}^{(t-1),\mathrm{raw}}
  + \gamma(1-w_{\mathrm{eq}}-w_{\mathrm{welfare}}).
  \label{eq:proof_step}
\end{equation}
Because $\gamma(1-w_{\mathrm{eq}}-w_{\mathrm{welfare}}) > 0$, the
result follows.
\end{proof}

The threshold $\gamma^* = \Delta^{(i)}/(1-w_{\mathrm{eq}}
-w_{\mathrm{welfare}})$ is finite for any weights satisfying
Equation~\eqref{eq:welfare_safety}, since $\epsilon_\theta$ is
continuous on a compact domain, and the denominator is strictly
positive. 

\subsubsection{L2: Preference Elicitation Gap}
\label{subsubsec:l2_theory}

\begin{observation}[Necessity of dialogue conditioning]
\label{obs:dialogue_necessity}
This observation restates a standard consequence of conditioning on
informative variables~\citep{Bishop2006PRML}. Let
$\mathcal{F}_{\mathcal{G}}$ denote models conditioned solely on
$\mathcal{G}$ and $\mathcal{F}_{D}$ denote models additionally
conditioned on $D$. If $\E[u_i^* \mid \mathcal{G},D]$ varies with $D$
for fixed $\mathcal{G}$, then for any $f \in \mathcal{F}_{\mathcal{G}}$:
\begin{equation}
  \E_D\!\bigl[\norm{f(\mathcal{G})-u^*(D)}^2\bigr]
  >
  \min_{f^* \in \mathcal{F}_{D}}
  \E_D\!\bigl[\norm{f^*(\mathcal{G},D)-u^*(D)}^2\bigr].
  \label{eq:l2_gap}
\end{equation}
The proof is a standard bias-variance decomposition:
$\E_D[\|f(\mathcal{G})-u^*(D)\|^2] = \|f(\mathcal{G})-\mu_D\|^2 +
\mathrm{Var}_D(u^*)$, where $\mu_D = \E_D[u^* \mid \mathcal{G}]$.
Because $D$ is informative, $\mathrm{Var}_D(u^*) > 0$, and the optimal
$f^*(\mathcal{G},D) = \E[u^* \mid \mathcal{G},D] \in
\mathcal{F}_{D}$ achieves strictly lower residual error.
\end{observation}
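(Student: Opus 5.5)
The plan is to prove Observation~\ref{obs:dialogue_necessity} as a conditional mean-squared-error argument with $\mathcal{G}$ held fixed throughout, so every expectation is over $D$ given $\mathcal{G}$. Write $\mu(\mathcal{G}) = \E[u^* \mid \mathcal{G}]$ and $m(\mathcal{G},D) = \E[u^* \mid \mathcal{G},D]$. I assume $u^*$ is square integrable and $\mathcal{F}_{D}$ is rich enough to contain $m$; without the latter, the minimum on the right of~\eqref{eq:l2_gap} need not reach the Bayes risk and the inequality can fail. I would state this assumption explicitly, since the paper's phrasing leaves it implicit.

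First I decompose the error of an arbitrary $f \in \mathcal{F}_{\mathcal{G}}$. Since $f(\mathcal{G})$ is constant given $\mathcal{G}$, adding and subtracting $m$ and using that $u^*-m$ is orthogonal to every $\sigma(\mathcal{G},D)$-measurable function gives
\begin{equation*}
\E_D\bigl[\norm{f(\mathcal{G})-u^*}^2\bigr] = \E_D\bigl[\norm{f(\mathcal{G})-m}^2\bigr] + \E_D\bigl[\norm{m-u^*}^2\bigr].
\end{equation*}
Next I do the same for any $g \in \mathcal{F}_{D}$, which yields $\E_D[\norm{g-u^*}^2] = \E_D[\norm{g-m}^2] + \E_D[\norm{m-u^*}^2]$. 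Its minimum is attained at $g=m$ and equals the irreducible term $\E_D[\norm{m-u^*}^2]$. The whole inequality therefore reduces to showing $\E_D[\norm{f(\mathcal{G})-m(\mathcal{G},D)}^2] > 0$ for every constant $f(\mathcal{G})$.

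The key step, and the main obstacle, is this strict positivity. A constant $c$ satisfies $\E_D[\norm{c-m}^2] \geq \E_D[\norm{\mu-m}^2] = \mathrm{Var}_D(m(\mathcal{G},D))$, because $\mu = \E_D[m]$ by the tower property. The hypothesis that $\E[u_i^* \mid \mathcal{G},D]$ varies with $D$ must be read as ``$m$ is not almost surely constant in $D$'', which gives $\mathrm{Var}_D(m) > 0$. I will point out that the quantity needing to be positive is $\mathrm{Var}_D(m)$, the variance of the conditional mean. The bias--variance sketch in the statement uses $\mathrm{Var}_D(u^*) > 0$ instead, and that is not sufficient, since $u^*$ can vary with $D$ purely as noise while $m$ stays constant. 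With $\mathrm{Var}_D(m) > 0$ in hand, combining the two decompositions gives the strict inequality~\eqref{eq:l2_gap} for every $f \in \mathcal{F}_{\mathcal{G}}$, which completes the argument.
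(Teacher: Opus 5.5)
Your proof is correct. It rests on the same bias--variance idea as the paper's sketch, but you center the decomposition at the richer conditional mean $m(\mathcal{G},D)=\E[u^*\mid\mathcal{G},D]$ rather than at $\mu_D=\E_D[u^*\mid\mathcal{G}]$.

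The paper's centering gives the left side of~\eqref{eq:l2_gap} as $\norm{f(\mathcal{G})-\mu_D}^2+\mathrm{Var}_D(u^*)$. It then only asserts that $m$ does strictly better, citing $\mathrm{Var}_D(u^*)>0$. To actually close that step you need the law of total variance, $\mathrm{Var}_D(u^*)=\E_D[\norm{u^*-m}^2]+\mathrm{Var}_D(m)$. This shows the excess risk over the Bayes risk is $\norm{f(\mathcal{G})-\mu_D}^2+\mathrm{Var}_D(m)$.

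Your centering gives this excess risk directly as $\E_D[\norm{f(\mathcal{G})-m}^2]\geq\mathrm{Var}_D(m)$. Strictness then follows from exactly the stated hypothesis, read as $m$ not being almost surely constant in $D$. Your remark that $\mathrm{Var}_D(u^*)>0$ alone does not suffice is right in general.

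One caveat on that criticism: the paper's notation $u^*(D)$ suggests $u^*$ may be a deterministic function of $D$ for fixed $\mathcal{G}$. Under that reading $m=u^*$, the Bayes risk is zero, and $\mathrm{Var}_D(u^*)=\mathrm{Var}_D(m)$, so the paper's sketch is complete. Your argument covers both readings.

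Your explicit assumptions, square integrability and $m\in\mathcal{F}_{D}$, are the same ones the paper makes implicitly when it places $\E[u^*\mid\mathcal{G},D]$ in $\mathcal{F}_{D}$.
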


Observation~\ref{obs:dialogue_necessity} establishes the formal
motivation for including dialogue conditioning. Whether the
representational capacity of the cross-attention formulation
(Equation~\eqref{eq:crossattn}) is realised in practice depends on
encoder quality and corpus volume.

\subsubsection{L3: Exponential Search Complexity}
\label{subsubsec:l3_theory}

\begin{proposition}[Inference complexity]
\label{prop:complexity}
The denoising inference cost of CGD is $O(T \cdot |\theta|)$,
independent of $|\Alloc| = |\Nagents|^{|\Items|}$.
\end{proposition}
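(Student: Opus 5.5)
The argument is a direct operation count over the inference pipeline of Figure~\ref{fig:architecture}. It splits the cost into a one-time conditioning phase and a $T$-step reverse-diffusion phase, and shows that neither phase contains a quantity depending on $|\Alloc|$.

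\textbf{Conditioning phase.} First bound the one-time cost of computing $\bv{h}^{\mathrm{ctx}}$.
\begin{itemize}
\item The GATv2 encoder (Equations~\eqref{eq:gatv2_logit}--\eqref{eq:gatv2_agg}) acts on a graph with $|\Nagents|=2$ nodes and two directed edges.
\item Each node feature $\bv{s}_i\in\R^F$ has fixed dimension $F$. The value vector $\bv{v}_i$ enters only as a padded block of length $J\le F$, so this pass costs $O(|\theta_{\mathrm{GAT}}|)$.
\item The dialogue encoder (Equation~\eqref{eq:roberta_out}) and the cross-attention (Equation~\eqref{eq:crossattn}) depend only on the dialogue length $L$ and on their own parameters.
\end{itemize}
These steps run once, independently of $t$, because $\bv{h}^{\mathrm{ctx}}$ is computed once and then broadcast (Section~\ref{subsec:fusion}). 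Their cost is therefore an additive term not involving $|\Alloc|$. I will absorb it into the $O(|\theta|)$ bound by treating $L$ as bounded by the encoder's maximum context and noting that attention cost is bounded in terms of the fixed architecture. Under the statement's convention, $|\theta|$ stands for the size of the fixed network.

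\textbf{Per-step phase.} Next, bound the cost of one reverse step in Equation~\eqref{eq:guided_step}. There are three contributions.
\begin{itemize}
\item \emph{Denoiser.} One forward evaluation of the U-Net $\epsilon_\theta(\bv{u}_t,t,\bv{h}^{\mathrm{ctx}})$ acts on a tensor in $\R^{|\Nagents|\times d_u}$ of fixed size. The FiLM parameters come from a small MLP applied to $[\bv{h}^{\mathrm{ctx}},\bv{e}_t]$ (Equation~\eqref{eq:film}). Each layer's work is linear in its parameter count, so the evaluation costs $O(|\theta|)$.
\item \emph{Affine update.} Forming $\bv{u}_{t-1}^{\mathrm{raw}}$ costs $O(|\Nagents| d_u)$.
\item \emph{Guidance gradient.} The gradient of $\Lnorm^{\mathrm{ext}}$ is given in closed form by Equation~\eqref{eq:grad_lnorm}. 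It needs only $b_i$, $\sigma_i$, $w_{\mathrm{eq}}$, $w_{\mathrm{welfare}}$ and the two scalars $\hat u_{i,1}$, so it costs $O(|\Nagents|)$. Even if it is obtained by autograd on the detached $\bv{u}_t$, the cost is $O(|\Nagents| d_u)$, because $\Lnorm^{\mathrm{ext}}$ does not pass through $\theta$.
\end{itemize}
The $\sigma_i$ are precomputed once by a top-$\lceil J/2\rceil$ selection, which costs $O(J\log J)$, polynomial and not dependent on $|\Alloc|$. Summing the contributions over $T$ steps gives $O(T\cdot|\theta|)$ plus lower-order terms.

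\textbf{Independence from $|\Alloc|$.} Finally, observe that no step enumerates, samples or indexes allocations $X\in\Alloc$. The only operation that searches $\Alloc$ is the post-processing of Equation~\eqref{eq:postproc}, and it lies outside the denoising inference, as distinguished in limitation L3. So $|\Alloc|=|\Nagents|^{|\Items|}$ appears nowhere in the count.

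\textbf{Main obstacle.} The delicate point is bookkeeping rather than mathematics. The conditioning cost and the dialogue length $L$ are not literally proportional to $T\cdot|\theta|$, and $J$ enters through $F$. I will handle this by stating explicitly that $F$, $d_u$, $L_{\max}$ and the architecture are fixed, so $J$ is bounded by the padding dimension. The complexity is then measured per fixed trained model, with any one-time encoder cost either absorbed or stated as an additive constant.
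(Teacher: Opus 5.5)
Your proposal is correct and takes essentially the paper's route. The paper also proves the proposition by inspecting Algorithm~\ref{alg:inference}: it charges a one-time cross-attention cost of $O(|\Nagents|\cdot L\cdot d_c)$, then for each of the $T$ reverse steps one $O(|\theta|)$ denoiser pass, an $O(|\Nagents|\cdot d_u)$ FiLM term and an $O(|\Nagents|)$ subgradient evaluation, and it leaves the exhaustive post-processing search outside the denoising cost. Your version is more explicit in its bookkeeping: it accounts for the GATv2 and dialogue-encoder passes, the $O(J\log J)$ precomputation of $\sigma_i$, the fact that $J$ enters only through the fixed padding dimension $F$, and how the one-time conditioning cost is absorbed as an additive term, all points the paper's short argument passes over.
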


\begin{proof}
This follows by inspection of Algorithm~\ref{alg:inference}.
Cross-attention fusion (Line~3) is computed once at cost
$O(|\Nagents| \cdot L \cdot d_c)$, constant in $|\Items|$. Each of
the $T$ reverse diffusion steps (Lines~5--8) requires one forward pass
through $\epsilon_\theta$ at cost $O(|\theta|)$, one FiLM modulation
at cost $O(|\Nagents| \cdot d_u)$, and one subgradient evaluation at
cost $O(|\Nagents|)$. None of these operations contains $|\Items|$ or
$|\Alloc|$, so total denoising cost is $O(T \cdot |\theta|)$.
\end{proof}

The post-processing step (Equation~\eqref{eq:postproc}) requires an
exhaustive search over $\Alloc$ and retains exponential dependence on
$|\Items|$. The continuous relaxation therefore shifts the exponential
bottleneck from the optimization phase to the post-processing phase.

\subsection{Inference Algorithm}
\label{subsec:inference}

Algorithm~\ref{alg:inference} integrates the components of
Sections~\ref{subsec:gat}--\ref{subsec:postproc} into a unified
end-to-end procedure.

Lines~1--3 process the two input modalities. The strategic context
$\mathcal{G}$ is encoded by GATv2 (Line~1) and the dialogue $D$ by the
dialogue encoder (Line~2). Cross-attention fuses these into
$\bv{h}^{\mathrm{ctx}}$ (Line~3), computed once and fixed across all
$T$ denoising steps. Line~4 initializes the reverse diffusion from
$\bv{u}_T \sim \mathcal{N}(\bv{0},\bv{I})$. Lines~5--8 execute the
reverse loop: at each step $t$, Line~6 computes the standard DDPM
update and Line~7 applies the normative guidance correction.
Theorem~\ref{thm:ir_guarantee} guarantees that this correction strictly
increases the utility of any IR-violating agent at each step $t$ where
$\gamma \geq \gamma^*$. Line~9 clamps the final output to $[0,1]$.
Line~10 recovers the discrete allocation via exhaustive
nearest-allocation search. Total computational cost is
$O(T \cdot |\theta|)$ for the denoising phase, plus $O(|\Alloc|)$
for post-processing.

\begin{algorithm}[h!]
  \caption{CGD Inference: Normative Guided Reverse Diffusion}
  \label{alg:inference}
  \begin{algorithmic}[1]
    \Require Strategic context $\mathcal{G}$, dialogue $D$,
             guidance scale $\gamma$, steps $T$,
             weights $w_{\mathrm{eq}},w_{\mathrm{welfare}} \geq 0$
             satisfying $w_{\mathrm{eq}}+w_{\mathrm{welfare}} < 1$
    \Ensure  Recommended utility $\hat{\bv{u}}_0$,
             discrete allocation $X^*$
    \State
      $\bv{H}^{\mathrm{strat}},\bv{h}^{\mathrm{strat}}
        \leftarrow \mathrm{GATv2}(\mathcal{G})$
      \hfill\Comment{Section~\ref{subsec:gat}}
    \State
      $\bv{H}^{\mathrm{txt}}
        \leftarrow \mathrm{DialogueEnc}(D)$
      \hfill\Comment{Section~\ref{subsec:roberta}}
    \State
      $\bv{h}^{\mathrm{ctx}}
        \leftarrow \mathrm{CrossAttn}
          (\bv{h}^{\mathrm{strat}},\bv{H}^{\mathrm{txt}})$
      \hfill\Comment{Eq.~\ref{eq:crossattn}; computed once}
    \State
      $\bv{u}_T \sim \mathcal{N}(\bv{0},\bv{I})$
      \hfill\Comment{initialize from prior}
    \For{$t = T,T{-}1,\ldots,1$}
      \State
        $\bv{u}_{t-1}^{\mathrm{raw}}
          \leftarrow
          \dfrac{1}{\sqrt{\alpha_t}}
          \!\left(
            \bv{u}_t
            - \dfrac{1-\alpha_t}{\sqrt{1-\bar{\alpha}_t}}
              \epsilon_\theta(\bv{u}_t,t,\bv{h}^{\mathrm{ctx}})
          \right)$
        \hfill\Comment{DDPM update, Eq.~\ref{eq:guided_step}}
      \State
        $\bv{u}_{t-1}
          \leftarrow
          \bv{u}_{t-1}^{\mathrm{raw}}
          - \gamma\,\nabla_{\!\bv{u}_t}
            \Lnorm^{\mathrm{ext}}(\bv{u}_t)$
        \hfill\Comment{per-step correction, Thm.~\ref{thm:ir_guarantee}}
    \EndFor
    \State
      $\hat{\bv{u}}_0
        \leftarrow \mathrm{clamp}(\bv{u}_0,0,1)$
      \hfill\Comment{enforce utility range}
    \State
      $X^*
        \leftarrow
        \argmin_{X \in \Alloc}
        \|\bv{u}(X)-\hat{\bv{u}}_0\|^2$
      \hfill\Comment{post-processing, Eq.~\ref{eq:postproc}}
    \State \Return $\hat{\bv{u}}_0,\;X^*$
  \end{algorithmic}
\end{algorithm}

\section{Experimental Setup}
\label{sec:setup}

\subsection{Corpora}
\label{subsec:corpora}

CGD is evaluated across three negotiation corpora comprising synthetic
and human-generated dialogues. Each instance in the primary evaluation
involves two agents negotiating over three indivisible issues, yielding
a discrete allocation space of $|\mathcal{X}| = 8$. A separate
scalability analysis in Section~\ref{subsec:scalability} extends the
synthetic corpus to larger issue sets $|\mathcal{J}| \in \{3, 5, 7,
10\}$, confirming constant denoising time across the resulting
allocation spaces $|\mathcal{X}| \in \{8, 32, 128, 1024\}$.

The Synthetic corpus is generated using a bilateral negotiation
simulator designed to ensure balance across agent roles. The first
agent's value vector is drawn from a symmetric Dirichlet distribution,
$\bv{v}_0 \sim \mathrm{Dir}(\bm{1}_J)$. The second agent's values are a
uniformly random permutation of the first, preserving identical marginal
distributions while introducing structural asymmetry within each
instance. Reservation utilities are set to a shared fraction
$f \sim \mathrm{Uniform}(0.1, 0.3)$ of each agent's highest-valued
item. A random role swap is applied with probability 0.5 to balance the
training distribution. Across 300 training and 60 validation instances,
the generator achieves an expected utility difference
$|\mathbb{E}[u_1] - \mathbb{E}[u_2]| < 0.001$ and win rate
$P(\text{agent 1 wins}) \approx 0.5$.

The CaSiNo corpus~\citep{Chawla2021} comprises 1,030
crowdsourced campsite negotiation dialogues in which participants
allocate food, water, and firewood according to private priority
orderings. After applying a data quality filter, 1,000 instances are
retained and partitioned into 800 training and 200 validation samples.
Item values are derived from three priority levels mapped to integer
scores and normalized to the unit interval. Social-value-orientation
(SVO) annotations are used as agent node features.

The DND corpus~\citep{Lewis2017} contains
human dialogues negotiating discrete pools of books, hats, and balls.
A unified parser standardizes the diverse distribution formats and
excludes malformed instances, resulting in 600 training and 120
validation samples. SVO annotations are unavailable for DND; the
corresponding node features are zero-padded to maintain architectural
consistency. Table~\ref{tab:corpora} summarizes the datasets.

\begin{table}[ht]
  \centering
  \caption{Corpus summary. $|\Alloc|$ denotes the number of possible
  discrete allocations. SVO indicates availability of social value
  orientation annotations. The Default BATNA strategy is used in the
  primary evaluation; the ablation study (\Cref{tab:ablation}) confirms
  that normative compliance is insensitive to BATNA strategy at the
  optimal guidance scale. $^\dagger$The Floor strategy applies $f=0.3$,
  the upper bound of the synthetic generator's range
  $f \sim \mathrm{Uniform}(0.1,0.3)$, and therefore overestimates the
  true reservation utility for instances where $f < 0.3$.}
  \label{tab:corpora}
  \small
  \setlength{\tabcolsep}{6pt}
  \begin{tabular}{lrrrccl}
    \toprule
    Corpus & Train & Val & $|\Alloc|$ & SVO & Human & Default BATNA \\
    \midrule
    Synthetic       & 300 &  60 & 8 & No  & No  & Zero    \\
    CaSiNo          & 800 & 200 & 8 & Yes & Yes & Zero    \\
    Deal or No Deal & 600 & 120 & 8 & No  & Yes & Zero    \\
    \bottomrule
  \end{tabular}
\end{table}

\subsection{Implementation Details}
\label{sec:implementation}

Architectural dimensions and training protocols are unified across all
experiments to ensure that observed performance differences reflect
corpus characteristics rather than dataset-specific tuning.
Table~\ref{tab:arch} summarizes the model architecture.

The GATv2 encoder uses $H = 4$ attention heads to process agent-level
strategic features. The dialogue encoder is a three-layer transformer
trained from random initialization jointly with the diffusion model,
encouraging an inductive bias toward corpus-specific lexical patterns.
It operates on a vocabulary of 8,192 tokens generated using a word
frequency tokenizer. The U-Net denoiser employs an encoder-decoder
structure with layer widths of 128, 256, and 256, incorporating a multi-head
self-attention bottleneck across the agent dimension. The total model
has 3,350,304 trainable parameters.

The forward diffusion process follows a standard linear noise schedule:
$\beta_t = \beta_1 + (t-1)(\beta_T - \beta_1)/(T-1)$, bounded by
$\beta_1 = 1 \times 10^{-4}$ and $\beta_T = 2 \times 10^{-2}$,
consistent with \cite{Ho2020}.

Reservation utilities $b_i$ are calibrated at inference time using one
of three strategies. The Zero strategy ($b_i = 0$) asserts
that any positive utility satisfies IR and provides a conservative lower
bound. The Floor strategy ($b_i = f \cdot \max_j v_{ij}$ with
$f = 0.3$) sets the BATNA proportional to the agent's most-valued item;
for the Synthetic corpus this overestimates the true reservation utility
since 0.3 is the upper bound of the generating distribution. The
Dataset strategy ($b_i = \hat{F}_i^{-1}(0.1)$) derives the
BATNA from the 10th percentile of realized utilities in the training
corpus. The ablation study confirms that normative compliance at the
optimal guidance scale is insensitive to BATNA strategy across all
corpora. Zero BATNA is used as the default throughout the primary
evaluation for consistency and reproducibility. BATNA values are
pre-computed at corpus construction time; the ablation therefore
measures sensitivity of inference-time normative guidance to BATNA
calibration independently of the model's training phase.

The equitability weight $w_{\mathrm{eq}} = 0.5$ serves three functions.
During training, it scales the equitability penalty in
$\Lnorm$ (Equation~\eqref{eq:norm_loss}), regularizing the denoiser
toward symmetric utility distributions. At inference time, it limits
the extent to which equitability corrections can oppose IR and security
corrections in the guidance gradient (Equation~\eqref{eq:guided_step}).
It also establishes the permissible range for the welfare guidance
weight: the joint constraint
$w_{\mathrm{eq}} + w_{\mathrm{welfare}} < 1$
(Equation~\eqref{eq:welfare_safety}) bounds $w_{\mathrm{welfare}} < 0.5$,
ensuring IR corrections remain dominant when welfare guidance is active.

The normative guidance gradient is computed by detaching $\bv{u}_t$
from the computational graph and calling backward on
$\Lnorm^{\mathrm{ext}}$ explicitly. This ensures that the guidance
gradient step is not suppressed by the surrounding inference context,
which is necessary for the per-step correction of
Theorem~\ref{thm:ir_guarantee} to take effect in practice.

Models are trained for 20 epochs using the AdamW
optimizer~\citep{Loshchilov2019} with initial learning rate
$\eta = 3 \times 10^{-4}$, weight decay $\lambda_{\mathrm{wd}} =
10^{-2}$, and cosine annealing to $\eta/20$. Gradient norms are clipped
at 1.0 and batch size is 16 throughout. All results are reported as
mean\,$\pm$\,standard deviation over three fixed random seeds
(42, 123, 456).

Figure~\ref{fig:convergence} reports training and validation diffusion
MSE across all 20 epochs at the best-performing configuration per
corpus. In all three cases, validation MSE remains at or below training
MSE throughout training and converges smoothly to stable values without
divergence. Best-val checkpoints occur at epoch 20 on CaSiNo and DND,
with training and validation losses within measurement noise on
Synthetic. This pattern is consistent with effective regularisation
(weight decay $\lambda_{\mathrm{wd}} = 10^{-2}$, cosine annealing) and
with the $\gamma$-dominance result established in
Section~\ref{subsec:ablation}: because normative compliance at optimal
$\gamma^*$ is insensitive to $\lambda$ across all 12 BATNA\,$\times$\,$\lambda$
configurations, the denoiser functions primarily as a generative prior
that is steered by the guidance gradient at inference time rather than
as a memorised lookup of training outcomes. 

\begin{figure}[h!]
  \centering
  \includegraphics[width=\linewidth]{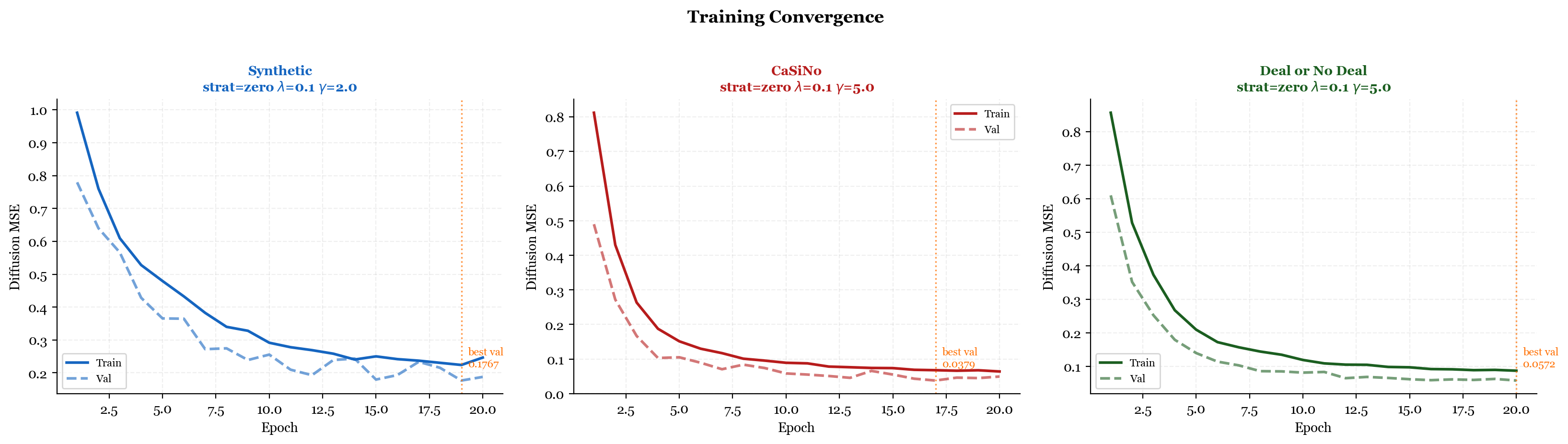}
  \caption{Training convergence across all three corpora at the
  best-performing configuration per corpus (zero BATNA, $\lambda=0.1$;
  Synthetic $\gamma^*=2.0$; CaSiNo and DND $\gamma^*=5.0$). Validation
  MSE remains at or below training MSE throughout, consistent with
  effective weight-decay regularisation. Best-val checkpoints (dotted
  lines): 0.1767 (Synthetic), 0.0379 (CaSiNo), 0.0372 (DND).}
  \label{fig:convergence}
\end{figure}

\begin{table}[ht]
  \centering
  \caption{CGD architectural dimensions applied across all experiments.
  The denoiser operates in a $d_u$-dimensional latent utility space per
  agent. Only the first component $\hat{u}_{i,1}$ serves as the scalar
  utility recommendation; the remaining $d_u - 1$ dimensions provide
  auxiliary latent capacity.}
  \label{tab:arch}
  \small
  \setlength{\tabcolsep}{8pt}
  \begin{tabular}{llr}
    \toprule
    Component & Hyperparameter & Value \\
    \midrule
    \multirow{3}{*}{GATv2 encoder}
      & Node feature dimension $F$            & 16    \\
      & Attention heads $H$                   & 4     \\
      & Output dimension $d_{\mathrm{strat}}$ & 128   \\
    \addlinespace[3pt]
    \multirow{3}{*}{Dialogue encoder}
      & Model dimension $d_{\mathrm{txt}}$    & 128   \\
      & Transformer layers                    & 3     \\
      & Vocabulary size                       & 8,192 \\
    \addlinespace[3pt]
    \multirow{2}{*}{Cross-attention}
      & Context dimension $d_c$               & 128   \\
      & Attention heads                       & 4     \\
    \addlinespace[3pt]
    \multirow{3}{*}{U-Net denoiser}
      & Utility feature dim $d_u$             & 32    \\
      & Layer widths                          & [128, 256, 256] \\
      & Bottleneck heads                      & 8     \\
    \addlinespace[3pt]
    \multirow{3}{*}{Diffusion schedule}
      & Steps $T$                             & 200   \\
      & $\beta_1$                             & $1 \times 10^{-4}$ \\
      & $\beta_T$                             & $2 \times 10^{-2}$ \\
    \addlinespace[3pt]
    Normative guidance
      & Equitability weight $w_{\mathrm{eq}}$ & 0.5   \\
    \midrule
    Total parameters & & 3,350,304 \\
    \bottomrule
  \end{tabular}
\end{table}

\subsection{Evaluation Metrics}
\label{subsec:metrics}

Six metrics are computed on the held-out validation split using the full
$T$-step reverse diffusion process. The IR rate, security gap, and
symmetry gap are evaluated on the continuous recommendation
$\hat{\bv{u}}_0$, consistent with the continuous nature of the guidance
mechanism. Social welfare ($W_{\mathrm{disc}}$) and the Pareto gap
($\Delta_P$) are computed on the post-processed discrete allocation
$X^*$, enabling direct comparison with baselines that operate
exclusively in $\Alloc$. All symmetry gap values use the normalized form
(Equation~\eqref{eq:metric_sym}) for all methods, including baselines,
to ensure comparability across methods.

The IR rate is the fraction of instances where both agents
exceed their reservation thresholds:
\begin{equation}
  \mathrm{IR} = \mathbb{E}\!\left[
    \prod_{i \in \Nagents} \mathbf{1}[\hat{u}_i > b_i]
  \right].
  \label{eq:metric_ir}
\end{equation}

The security gap quantifies the expected shortfall below each
agent's security level:
\begin{equation}
  \Delta_\sigma = \mathbb{E}\!\left[
    \sum_{i \in \Nagents} (\sigma_i - \hat{u}_i)^+
  \right].
  \label{eq:metric_sec}
\end{equation}

The symmetry gap measures normalized utility asymmetry:
\begin{equation}
  \Delta_{\mathrm{eq}} = \mathbb{E}\!\left[
    \frac{|\hat{u}_1 - \hat{u}_2|}{\hat{u}_1 + \hat{u}_2 + \varepsilon}
  \right],
  \label{eq:metric_sym}
\end{equation}
where $\varepsilon = 10^{-6}$ prevents division by zero. This
operationalizes Definition~\ref{def:equitability} with joint welfare
normalization as specified in Remark~\ref{rem:sym_normalisation}.
Configurations with $\Delta_{\mathrm{eq}} > 0.15$ are flagged
throughout the evaluation.

Social welfare is $W_{\mathrm{disc}} =
\mathbb{E}[u_1(X^*) + u_2(X^*)]$. The Pareto gap measures the
fraction of maximum achievable welfare not captured:
\begin{equation}
  \Delta_P = \mathbb{E}\!\left[
    \frac{W^* - (u_1(X^*) + u_2(X^*))}{W^*}
  \right], \quad
  W^* = \max_{X' \in \Alloc}\{u_1(X') + u_2(X')\},
  \label{eq:metric_pareto}
\end{equation}
where $W^*$ is derived by exhaustive enumeration of $\Alloc$.

\subsection{Baselines}
\label{subsec:baselines}
Random allocation assigns each issue uniformly at random,
disregarding agent valuations and reservation utilities. This establishes
a lower performance bound across all normative criteria.

Greedy allocation assigns each issue to the agent who values
it most, maximizing joint welfare through exhaustive comparison. While
Pareto-efficient by construction, it systematically violates individual
rationality and produces highly asymmetric outcomes, establishing the
welfare ceiling achievable in $\Alloc$.

Nash Bargaining Solution (NBS) identifies the allocation
maximizing the Nash product $\prod_{i \in \Nagents}(u_i(X) - b_i)$
subject to individual rationality~\citep{Nash1950}. The algorithm
operates by exhaustive enumeration over all $|\Alloc| = 8$ allocations.
All NBS evaluations use zero BATNA ($b_i = 0$) throughout for
consistency with the CGD primary evaluation condition. Under this
condition, the NBS achieves individual rationality on all evaluated
instances. However, it provides no mechanism for enforcing security
proximity or equitability: empirical security gaps of 0.397 (CaSiNo)
and 0.635 (Deal or No Deal) and symmetry gaps exceeding 0.15 on two of
three corpora confirm this structural limitation (Section~\ref{sec:eval}).
When $\AllocIR = \emptyset$, the NBS is undefined~\citep{Nash1950} and
reverts to a greedy allocation. Under zero BATNA this condition does not
arise on the evaluated corpora. Results are reported as mean\,$\pm$\,std
over three fixed seeds because seed affects the Dataset BATNA percentile
estimation used in the BATNA ablation.

Constrained Gradient Optimization (CGO) applies projected
gradient descent directly on the normative loss $\Lnorm$ in the
continuous utility space $\R^{|\Nagents|}$ without a learned generative
prior. Starting from a uniform initialization in $[0.5, 1.0]^2$, the
optimizer runs 500 steps with learning rate $5 \times 10^{-2}$. The
resulting continuous recommendation is mapped to the nearest discrete
allocation by the same post-processing step used by CGD
(Equation~\eqref{eq:postproc}). CGO is included to test whether the
learned generative prior is necessary for robust normative compliance,
or whether direct constraint minimization in the continuous utility
space is sufficient — a natural question given that the recommendation
target is two-dimensional. This specific configuration (uniform
initialization, fixed learning rate, no momentum) represents one
reasonable instantiation of constrained gradient optimization; a
more extensively tuned optimizer might narrow the performance gap,
and the CGO results should be interpreted as evidence that naive
constraint minimization is insufficient, not as an exhaustive
characterisation of its limits.

RNN-SL and RNN-RL~\citep{Lewis2017} are dialogue agents
trained exclusively on the DND corpus. RNN-SL uses supervised
behavioral cloning from human negotiation traces; RNN-RL fine-tunes
this model via self-play to maximize individual reward. Both are
evaluated solely on DND. As neither method produces per-instance
security levels, the security gap is not available ($\mathtt{n/a}$) for
these baselines. The symmetry gap for both methods is computed using the
normalized form (Equation~\eqref{eq:metric_sym}) applied to the
per-instance utility estimates reported in \cite{Lewis2017}, Table~1.

To validate the choice of GATv2 over alternative graph encoders, three
variants are evaluated with all other components held identical to the
full CGD model. The Flat MLP variant replaces the GATv2 encoder
with a multi-layer perceptron applied independently to each agent's
feature vector, with no graph structure. This tests whether the graph
inductive bias contributes beyond a non-relational baseline. The
GCN variant uses a graph convolutional
network~\citep{Kipf2017}, testing whether simple symmetric aggregation
is sufficient. The GAT variant uses the original Graph
Attention Network~\citep{Velickovic2018} with static attention
coefficients, isolating the contribution of dynamic versus static
attention. Results are reported in Section~\ref{subsec:graph_ablation}.

Two ablation variants evaluate the necessity of dialogue conditioning as posited in Observation~\ref{obs:dialogue_necessity}. The No-Dialogue variant eliminates the dialogue encoder and cross-attention mechanisms entirely, conditioning generation solely on the strategic graph embedding. The RoBERTa variant replaces the from-scratch dialogue encoder with a frozen RoBERTa-base encoder~\citep{Liu2019RoBERTa} coupled with a trainable linear projection head. This variant explicitly tests whether pre-trained semantic representations improve social welfare over the baseline, thereby isolating whether the primary bottleneck lies in corpus informativeness or the model's intrinsic semantic encoder capacity. Experimental results for these configurations are reported in Section~\ref{subsec:dialogue_ablation}.

The welfare guidance term (Equation~\eqref{eq:grad_lnorm}) is evaluated
at the best-performing checkpoint of each corpus without retraining. The
hyperparameter $w_{\mathrm{welfare}}$ is swept over
$\{0.0, 0.1, 0.2, 0.3, 0.4\}$, maintaining adherence to the safety
constraint $w_{\mathrm{eq}} + w_{\mathrm{welfare}} < 1$. The optimal
value $w^*$ is the minimum $w_{\mathrm{welfare}}$ achieving maximum
$W_{\mathrm{disc}}$ while satisfying $\mathrm{IR} \geq 1.0$ and
$\Delta_{\mathrm{eq}} \leq 0.15$. Separately, the weighted
post-processing parameter $\alpha$ 
is swept over $\{0.0, 0.1, 0.2, 0.3, 0.4, 0.5, 0.6, 0.8, 1.0\}$. The
empirical result $\alpha^* = 1.0$ universally across all corpora is
reported in Section~\ref{sec:welfare_results}.

\section{Empirical Evaluation}
\label{sec:eval}

This section addresses five questions derived from the structural
limitations and architectural choices established in the preceding
sections.
(Q1) Does CGD outperform established baselines on security proximity and
equitability simultaneously, and what is the welfare cost of normative
compliance?
(Q2) How sensitive are outcomes to the guidance scale $\gamma$,
normative penalty weight $\lambda$, and BATNA strategy?
(Q3) Does dynamic graph attention (GATv2) improve normative compliance
over static-attention and non-relational encoders?
(Q4) Does inference-time welfare guidance recover Pareto efficiency
while preserving individual rationality, and can normative compliance
and welfare maximization be operationally decoupled?
(Q5) Under what conditions does dialogue conditioning improve welfare,
and does the cross-attention architecture provide the capacity to
exploit preference signals beyond declared values?
(Q6) Is CGD denoising time empirically independent of the allocation
space cardinality?

\subsection{Comparative Performance}
\label{subsec:compare}

Table~\ref{tab:baseline} reports baseline comparison on the Synthetic
corpus. CGD achieves IR = 1.000 and a zero security gap. The Greedy
baseline satisfies Pareto efficiency by construction but fails IR in
20.8\% of instances, confirming the tension between efficiency and
individual rationality in discrete allocation spaces.

The NBS achieves IR = 1.000 under consistent zero BATNA evaluation.
Its structural limitation is normative: it maximizes the Nash product
subject to individual rationality only and provides no mechanism for
enforcing security proximity or equitability. This produces a security
gap of 0.030 and a normalized symmetry gap of 0.182 on the Synthetic
corpus, both substantially higher than CGD (0.000 and 0.099,
respectively). The NBS achieves higher joint welfare than CGD on all
corpora under zero BATNA: the welfare cost of CGD's normative compliance
is approximately 3\% on Synthetic and CaSiNo and 1\% on DND. This
welfare-compliance tradeoff is the correct characterization of the CGD
contribution relative to NBS.

The Constrained Gradient Optimization (CGO) baseline directly minimizes
$\Lnorm$ in utility space without a learned prior. On Synthetic, CGO
achieves IR = 0.983 but produces a symmetry gap of 0.134, near the
equitability threshold. On CaSiNo, CGO violates the equitability
threshold ($\Delta_{\mathrm{eq}} = 0.261$) despite achieving IR = 1.000.
On DND, CGO fails individual rationality entirely (IR = 0.092). These
failures indicate that the evaluated CGO configuration — uniform
initialization without momentum or warm-starting from the training
distribution — cannot replicate the contextual grounding provided by
the learned generative prior across heterogeneous corpora. As noted in
Section~\ref{subsec:baselines}, a more extensively tuned optimizer might
narrow this gap; the results establish that naive constraint
minimization is insufficient, not that no constrained optimizer could
approach CGD's compliance.

\begin{table}[h!]
\centering
\caption{Baseline comparison on the Synthetic corpus (mean\,$\pm$\,std
over 3 seeds\,$\times$\,60 validation instances). All symmetry gaps use
the normalized metric (Eq.~\eqref{eq:metric_sym}). The best value per
column is in \textbf{bold}. $\dagger$ indicates a symmetry gap exceeding
the 0.15 threshold. Welfare refers to $W_{\mathrm{disc}}$. All methods
use zero BATNA.}
\label{tab:baseline}
\footnotesize
\setlength{\tabcolsep}{3pt}
\renewcommand{\arraystretch}{1.2}
\begin{tabular}{lccccc}
\toprule
\textbf{Method}
  & \textbf{IR}\,$\uparrow$
  & \textbf{Sec.\,gap}\,$\downarrow$
  & \textbf{Sym.\,gap}
  & \textbf{Welfare}\,$\uparrow$
  & \textbf{Pareto}\,$\downarrow$ \\
\midrule
Random
  & $0.542\pm.040$ & $0.217\pm.013$
  & $0.549\pm.014^\dagger$ & $0.994\pm.010$ & $0.233\pm.017$ \\
Greedy
  & $0.792\pm.008$ & $0.084\pm.002$
  & $0.433\pm.010^\dagger$ & $1.325\pm.012$ & $0.063\pm.003$ \\
NBS
  & $1.000\pm.000$ & $0.030\pm.003$
  & $0.182\pm.004^\dagger$ & $\mathbf{1.393}\pm.010$ & $0.000\pm.000$ \\
CGO
  & $0.983\pm.008$ & $0.013\pm.002$
  & $0.134\pm.012$ & $1.360\pm.006$ & $0.028\pm.003$ \\
\midrule
\textbf{CGD}
  & $\mathbf{1.000}\pm.000$ & $\mathbf{0.000}\pm.000$
  & $\mathbf{0.099}\pm.013$ & $1.353\pm.005$ & $0.030\pm.003$ \\
\bottomrule
\end{tabular} 
\end{table}

Table~\ref{tab:main} and Figure~\ref{fig:best_bar} extend this evaluation across all corpora. CGD achieves an $\mathrm{IR}$ rate of 1.000 on both the Synthetic and CaSiNo datasets. On DND, the three-seed mean is $0.997 \pm .004$. This boundary case is theoretically consistent: while the per-step correction established in Theorem~\ref{thm:ir_guarantee} guarantees a directional improvement at each violating step, whether these accumulated corrections across $T$ steps suffice to fully satisfy individual rationality remains dependent on the underlying corpus distribution and the calibrated value of $\gamma$. Notably, human negotiation corpora require a higher optimal guidance scale ($\gamma^* = 5.0$) than synthetic data ($\gamma^* = 2.0$). This differential quantifies the degree to which human bargaining behavior diverges from strict game-theoretic rationality.

\begin{figure}[h!]
  \centering
  \includegraphics[width=1\linewidth]{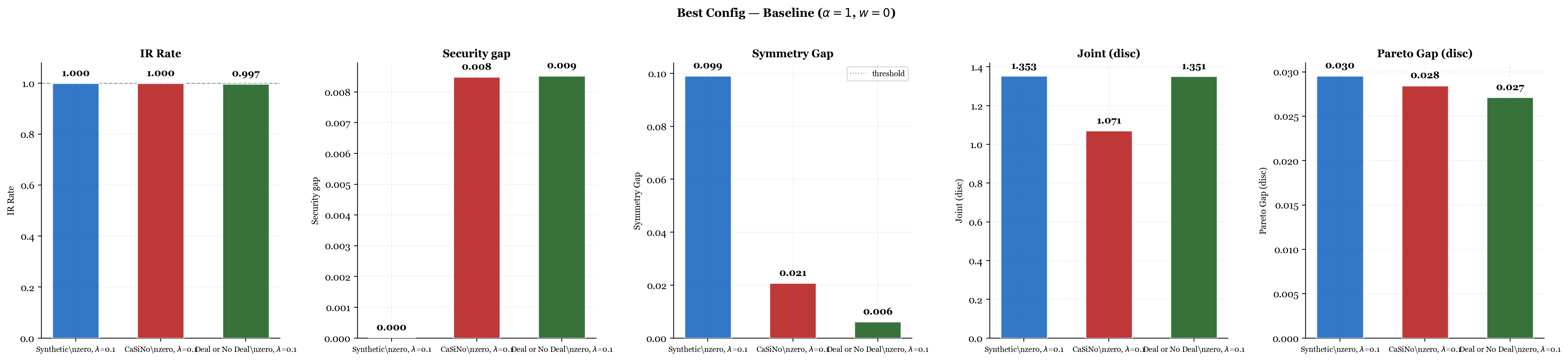}
  \caption{Best-configuration CGD metrics across three corpora (zero
  BATNA, $\lambda=0.1$; Synthetic $\gamma^*=2.0$; CaSiNo and DND
  $\gamma^*=5.0$; mean over 3 seeds). The dotted line in the Symmetry
  Gap panel marks the 0.15 equitability threshold. IR rate: 1.000
  (Synthetic, CaSiNo) and 0.997 (DND). Security gap: 0.000
  (Synthetic), 0.008 (CaSiNo), 0.009 (DND). Symmetry gap: 0.099,
  0.021, 0.006, all within the threshold. Joint surplus
  ($W_{\mathrm{disc}}$): 1.353, 1.071, 1.351. Pareto gap: 0.030,
  0.028, 0.027.}
  \label{fig:best_bar}
\end{figure}

\begin{table}[h!]
\centering
\caption{CGD best-configuration results (mean\,$\pm$\,std over 3
seeds). All symmetry gaps satisfy the 0.15 threshold. Welfare refers
to $W_{\mathrm{disc}}$. Results are identical across all 12
BATNA\,$\times\,\lambda$ configurations per corpus
(Section~\ref{subsec:ablation}).}
\label{tab:main}
\footnotesize
\setlength{\tabcolsep}{3pt}
\renewcommand{\arraystretch}{1.2}
\resizebox{\columnwidth}{!}{%
\begin{tabular}{lcccccccc}
\toprule
\textbf{Corpus} & \textbf{BATNA} & $\lambda$ & $\gamma^*$
  & \textbf{IR}\,$\uparrow$
  & \textbf{Sec.\,gap}\,$\downarrow$
  & \textbf{Sym.\,gap}
  & \textbf{Welfare}\,$\uparrow$
  & \textbf{Pareto}\,$\downarrow$ \\
\midrule
Synthetic & zero & 0.1 & 2.0
  & $\mathbf{1.000}\pm.000$ & $\mathbf{0.000}\pm.000$
  & $0.099\pm.013$ & $1.353\pm.005$ & $0.030\pm.003$ \\
CaSiNo & zero & 0.1 & 5.0
  & $\mathbf{1.000}\pm.000$ & $0.008\pm.001$
  & $0.021\pm.002$ & $1.071\pm.003$ & $0.028\pm.002$ \\
DND & zero & 0.1 & 5.0
  & $0.997\pm.004$ & $0.009\pm.002$
  & $\mathbf{0.006}\pm.001$ & $1.351\pm.001$ & $0.027\pm.001$ \\
\bottomrule
\end{tabular}}
\end{table}

\subsection{Ablation Study}
\label{subsec:ablation}

\subsubsection{Effect of the Guidance Scale $\gamma$}
\label{subsubsec:gamma}

The guidance scale $\gamma$ is the primary inference-time parameter for
normative steering (Figure~\ref{fig:guidance_sweep}). At $\gamma = 0$,
IR rates collapse to 0.49, 0.07, and 0.04 on Synthetic, CaSiNo, and DND
respectively. Synthetic and CaSiNo recover to IR\,$\approx\!1.000$ by
$\gamma = 1.0$; DND requires $\gamma = 5.0$. The security gap decreases
monotonically with $\gamma$ across all corpora: from 0.22 at $\gamma = 0$
to zero at $\gamma \approx 1.0$ on Synthetic, and from initial highs of
0.90 (CaSiNo) and 0.70 (DND) to 0.008--0.011 at $\gamma = 5$.
Consequently, $\gamma = 5.0$ is adopted as the default for human corpora
and $\gamma = 2.0$ for synthetic data.

\begin{figure}[h!]
  \centering
  \includegraphics[width=1\linewidth]{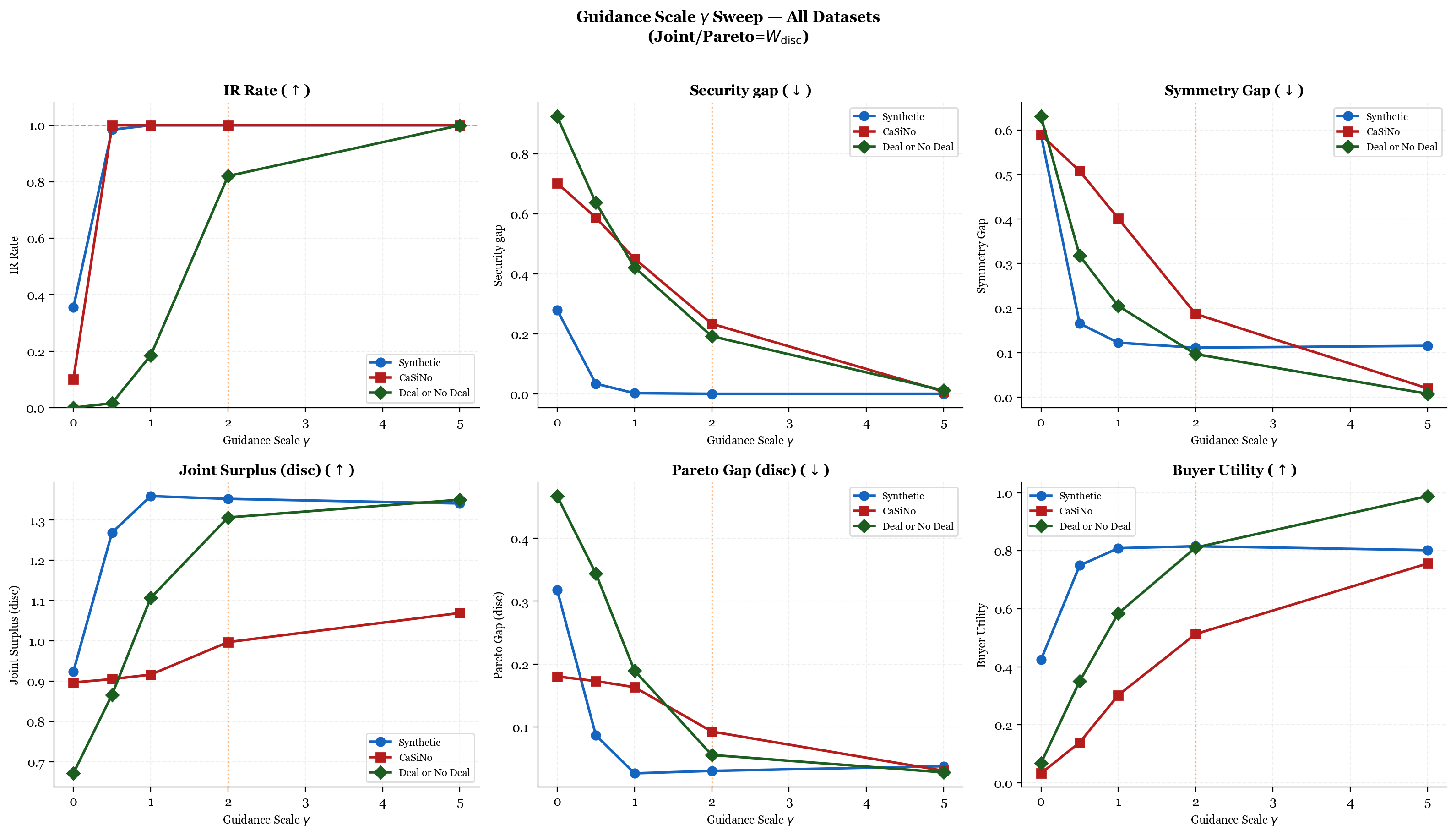}
  \caption{Guidance scale $\gamma$ sweep across corpora. Metrics
  evaluated at $\gamma \in \{0, 0.5, 1.0, 2.0, 5.0\}$, averaged over
  the BATNA\,$\times$\,$\lambda$ grid (shaded area: $\pm$1\,std).
  IR: Synthetic and CaSiNo reach 1.000 by $\gamma = 1.0$; DND requires
  $\gamma = 5.0$. Security gap: falls from $[0.22, 0.90]$ at $\gamma = 0$
  to $[0.000, 0.011]$ at $\gamma = 5$. All corpora satisfy
  $\Delta_{\mathrm{eq}} \leq 0.15$ at $\gamma = 5.0$.}
  \label{fig:guidance_sweep}
\end{figure}

\subsubsection{Effect of $\lambda$ and BATNA Strategy}
\label{subsubsec:lambda_batna}

All three BATNA strategies produce overlapping flat IR profiles at 1.000
for Synthetic and CaSiNo across all tested $\lambda$ values. For DND,
profiles plateau at $\approx\!0.83$ at $\gamma = 2.0$, confirming that
guidance scale rather than BATNA calibration or normative penalty weight
is the sole normative control parameter. Security gap variance across
BATNA strategies is on the order of $10^{-7}$ to $10^{-9}$. All 36
BATNA\,$\times$\,$\lambda$ configurations at optimal $\gamma^*$ yield
identical metrics per corpus (Table~\ref{tab:ablation}), confirming
$\gamma$-dominance throughout.

\begin{table}[h!]
\centering
\caption{Full ablation results (mean\,$\pm$\,std over 3 seeds, at
optimal $\gamma^*$). One representative row is displayed per corpus grouping as all 12
BATNA\,$\times$\,$\lambda$ configurations produce identical results.}
\label{tab:ablation}
\small
\setlength{\tabcolsep}{3pt}
\renewcommand{\arraystretch}{1.2}
\resizebox{\columnwidth}{!}{%
\begin{tabular}{lllcccccc}
\toprule
\textbf{Corpus} & \textbf{BATNA} & $\lambda$ & $\gamma^*$
  & \textbf{IR}\,$\uparrow$
  & \textbf{Sec.\,gap}\,$\downarrow$
  & \textbf{Sym.\,gap}
  & \textbf{Welfare}\,$\uparrow$
  & \textbf{Pareto}\,$\downarrow$ \\
\midrule
\multirow{3}{*}{\textit{Synthetic}}
  & zero    & 0.1 & 2.0
    & \multirow{3}{*}{$1.000\pm.000$}
    & \multirow{3}{*}{$0.000\pm.000$}
    & \multirow{3}{*}{$0.111\pm.013$}
    & \multirow{3}{*}{$1.352\pm.005$}
    & \multirow{3}{*}{$0.030\pm.003$} \\
  & floor   & 0.1 & 2.0 & & & & & \\
  & dataset & 0.1 & 2.0 & & & & & \\
\midrule
\multirow{3}{*}{\textit{CaSiNo}}
  & zero    & 0.1 & 5.0
    & \multirow{3}{*}{$1.000\pm.000$}
    & \multirow{3}{*}{$0.008\pm.001$}
    & \multirow{3}{*}{$0.020\pm.002$}
    & \multirow{3}{*}{$1.069\pm.003$}
    & \multirow{3}{*}{$0.030\pm.002$} \\
  & floor   & 0.1 & 5.0 & & & & & \\
  & dataset & 0.1 & 5.0 & & & & & \\
\midrule
\multirow{3}{*}{\textit{DND}}
  & zero    & 0.1 & 5.0
    & \multirow{3}{*}{$0.997\pm.004$}
    & \multirow{3}{*}{$0.009\pm.002$}
    & \multirow{3}{*}{$0.006\pm.001$}
    & \multirow{3}{*}{$1.351\pm.001$}
    & \multirow{3}{*}{$0.027\pm.001$} \\
  & floor   & 0.1 & 5.0 & & & & & \\
  & dataset & 0.1 & 5.0 & & & & & \\
\bottomrule
\end{tabular}}
\end{table}

\subsection{Graph Encoder Ablation}
\label{subsec:graph_ablation}

Table~\ref{tab:graph_ablation} reports four encoder variants evaluated
with all other components held identical at each corpus's optimal
$\gamma^*$. The GATv2 values are taken from the main ablation
(Table~\ref{tab:main}); the three alternative encoders are from the
standalone ablation at 20 epochs.

The Flat MLP encoder, which applies an MLP independently to each agent
without message passing, produces the highest security gaps across all
corpora, confirming that the graph inductive bias contributes beyond a
non-relational baseline. The GCN encoder achieves lower security gaps
than Flat MLP but substantially lower welfare on CaSiNo (0.939) and
DND (1.250) than GATv2 (1.071 and 1.351 respectively). The original
GAT encoder with static attention improves over GCN on DND but does
not match GATv2 on any metric.

GATv2 achieves the lowest security gap and highest welfare on CaSiNo
and DND, where preference asymmetries are strongest. The improvement
from GAT to GATv2 is most pronounced on DND: security gap falls from
0.207 to 0.009 and welfare rises from 1.227 to 1.351, supporting the
argument in Section~\ref{subsec:gat} that dynamic attention is
necessary to represent the comparative bilateral preference signal.
On Synthetic, all four encoders perform equivalently, confirming that
the GATv2 advantage is tied to preference asymmetry rather than
general model capacity. The higher GATv2 symmetry gap on Synthetic
(0.099 versus 0.022--0.024) reflects the structural Pareto--equitability
tension in the permutation-based generator; all values satisfy the
0.15 threshold.

\begin{table}[h!]
  \centering
  \caption{Graph encoder ablation. All variants use zero BATNA,
  $\lambda=0.1$, and optimal $\gamma^*$.}
  \label{tab:graph_ablation}
  \setlength{\tabcolsep}{3pt}
  \renewcommand{\arraystretch}{1.05}
  \begin{tabular}{llccccc}
    \toprule
    \textbf{Corpus} & \textbf{Encoder}
      & \textbf{IR}\,$\uparrow$
      & \textbf{Sec.\,gap}\,$\downarrow$
      & \textbf{Sym.\,gap}
      & \textbf{Welfare}\,$\uparrow$
      & \textbf{Pareto}\,$\downarrow$ \\
    \midrule
    \multirow{4}{*}{\textit{Synthetic}}
      & Flat MLP & $1.000$ & $0.0001$ & $0.0237$
                  & $1.340\pm.003$ & $0.0099$ \\
      & GCN      & $1.000$ & $0.0002$ & $0.0224$
                  & $1.340\pm.001$ & $0.0098$ \\
      & GAT      & $1.000$ & $0.0001$ & $0.0227$
                  & $1.340\pm.002$ & $0.0100$ \\
      & \textbf{GATv2}
                  & $\mathbf{1.000}$ & $\mathbf{0.000}$
                  & $0.099$ & $\mathbf{1.353\pm.005}$ & $0.030$ \\
    \midrule
    \multirow{4}{*}{\textit{CaSiNo}}
      & Flat MLP & $1.000$ & $0.0041$ & $0.0787$
                  & $0.930\pm.006$ & $0.1510$ \\
      & GCN      & $1.000$ & $0.0053$ & $0.0922$
                  & $0.939\pm.004$ & $0.1430$ \\
      & GAT      & $1.000$ & $0.0052$ & $0.0883$
                  & $0.931\pm.004$ & $0.1501$ \\
      & \textbf{GATv2}
                  & $\mathbf{1.000}$ & $0.0080$
                  & $\mathbf{0.021}$
                  & $\mathbf{1.071\pm.003}$ & $\mathbf{0.028}$ \\
    \midrule
    \multirow{4}{*}{\textit{DND}}
      & Flat MLP & $1.000$ & $0.2247$ & $0.1054$
                  & $1.212\pm.024$ & $0.0856$ \\
      & GCN      & $1.000$ & $0.1686$ & $0.1018$
                  & $1.250\pm.017$ & $0.0610$ \\
      & GAT      & $1.000$ & $0.2067$ & $0.1047$
                  & $1.227\pm.027$ & $0.0763$ \\
      & \textbf{GATv2}
                  & $0.997$ & $\mathbf{0.009}$
                  & $\mathbf{0.006}$
                  & $\mathbf{1.351\pm.001}$ & $\mathbf{0.027}$ \\
    \bottomrule
  \end{tabular}
\end{table}

\subsection{Dialogue Conditioning}
\label{subsec:dialogue_ablation}

This paragraph addresses Q5. The cross-attention architecture introduced in Contribution~C1 provides the representational capacity to exploit latent preference signals beyond the declared value matrix~$\mathbf{V}$, provided those signals are embedded within the dialogue text. Realizing this capacity requires two separable conditions to hold simultaneously: the corpus must carry recoverable preference signals independent of~$\mathbf{V}$, and the encoder must possess sufficient semantic competence to extract them. To probe these underlying conditions, we design two distinct experiments: an ablation study on the main corpora to evaluate whether either condition holds at current training scales, and a controlled experiment that explicitly isolates the architectural contribution by constructing a setting where the first condition is guaranteed to hold by design.

\subsubsection{Ablation on Main Corpora}
\label{subsubsec:dialogue_main}

Three encoder variants are compared: Full CGD with a from-scratch
transformer encoder, No-Dialogue conditioning on the strategic graph
only, and RoBERTa using a frozen RoBERTa-base encoder with a trainable
projection head. Normative compliance metrics are identical across all
three variants, confirming that the guidance mechanism is the sole
source of normative compliance and is independent of the dialogue
encoder. Welfare differences are therefore the only meaningful
discriminator.

On Synthetic, all three variants produce equivalent welfare
($|\Delta W| \leq 0.011$), consistent with the negative control.
On CaSiNo, $\Delta W = 0.000$ between RoBERTa and No-Dialogue. On DND,
RoBERTa marginally outperforms No-Dialogue ($\Delta W = +0.009$) but
cross-seed standard deviation of $\pm.034$ makes this statistically
uncertain. Neither from-scratch training on 300--800 instances nor
frozen pre-trained representations produce welfare improvements over
the value-matrix-only baseline on the evaluated corpora. This null
result is consistent with either separable condition failing: the
CaSiNo and DND dialogues may not carry recoverable preference signals
beyond $\bv{V}$ at current corpus scales, or the encoder variants may
lack the semantic competence to extract them even if such signals exist.
The controlled experiment in Section~\ref{subsec:dialogue_poc} is
designed to distinguish these explanations.

\begin{table}[h!]
  \centering
  
  \caption{Dialogue conditioning ablation (mean\,$\pm$\,std, 3 seeds).
  $\Delta W$ is relative to No-Dialogue. Normative compliance metrics
  are identical across all variants and are omitted; see
  Table~\ref{tab:main} for full CGD results.}
  \label{tab:dialogue_ablation}
  \setlength{\tabcolsep}{3pt}
  \renewcommand{\arraystretch}{1.05}
  \begin{tabular}{llccc}
    \toprule
    \textbf{Corpus} & \textbf{Encoder}
      & \textbf{Welfare}\,$\uparrow$
      & $\Delta W$
      & \textbf{Sym.\,gap} \\
    \midrule
    \multirow{3}{*}{\textit{Synthetic}}
      & Full CGD    & $1.349\pm.002$ & $-0.011$ & $0.108$ \\
      & No-Dialogue & $1.360\pm.002$ & ---      & $0.100$ \\
      & RoBERTa     & $1.362\pm.001$ & $+0.002$ & $0.098$ \\
    \midrule
    \multirow{3}{*}{\textit{CaSiNo}}
      & Full CGD    & $1.076\pm.006$ & $+0.000$ & $0.023$ \\
      & No-Dialogue & $1.076\pm.002$ & ---      & $0.022$ \\
      & RoBERTa     & $1.073\pm.006$ & $-0.003$ & $0.024$ \\
    \midrule
    \multirow{3}{*}{\textit{DND}}
      & Full CGD    & $1.351\pm.017$ & $+0.004$ & $0.006$ \\
      & No-Dialogue & $1.347\pm.010$ & ---      & $0.006$ \\
      & RoBERTa     & $1.356\pm.034$ & $+0.009$ & $0.007$ \\
    \bottomrule
  \end{tabular}
\end{table}

\subsubsection{Capacity Confirmation Under Strategic Misrepresentation}
\label{subsec:dialogue_poc}

The null result on real corpora leaves both explanations open. Resolving this requires a setting with known ground truth. Agents' true preferences must be known independently of their declarations. Neither CaSiNo nor DND provides this ground truth. A controlled experiment is therefore constructed. Each agent's declared value vector has its top-two preferences swapped. This swap occurs with a probability of $\{0.0, 0.5, 1.0\}$. Meanwhile, the dialogue encodes the true rank order. A model conditioned only on declarations fails here. It systematically recommends the wrong allocation on swapped instances. A model that reads the dialogue can correct this error. Normative guidance is disabled during this test. This explicitly isolates the representational contribution of the dialogue encoder.

Figure~\ref{fig:dialogue_poc} reports the findings. Full CGD achieves
higher true welfare than No-Dialogue across all three swap levels
($\Delta J_{\mathrm{true}} \in [0.008, 0.015]$, all exceeding the
0.005 significance threshold over three seeds), confirming that the
cross-attention architecture can exploit dialogue signals when they are
present and decodable. The welfare advantage does not grow monotonically
with swap probability and the greedy-match rate is marginally lower for
Full CGD, indicating that the advantage arises from richer cross-agent
representational interactions rather than lexical preference decoding.
This result isolates corpus informativeness rather than encoder capacity
as the binding constraint for the real-data null result: the
architecture is capable of exploiting dialogue signals in the controlled
setting; what is absent from CaSiNo and DND at current scales is a
sufficiently recoverable preference signal beyond~$\bv{V}$. Pre-trained
encoder integration with fine-tuning on corpora documenting strategic
misrepresentation constitutes the primary path to realising this
capacity on real negotiation data.

\begin{figure}[h!]
  \centering
  \includegraphics[width=\linewidth]{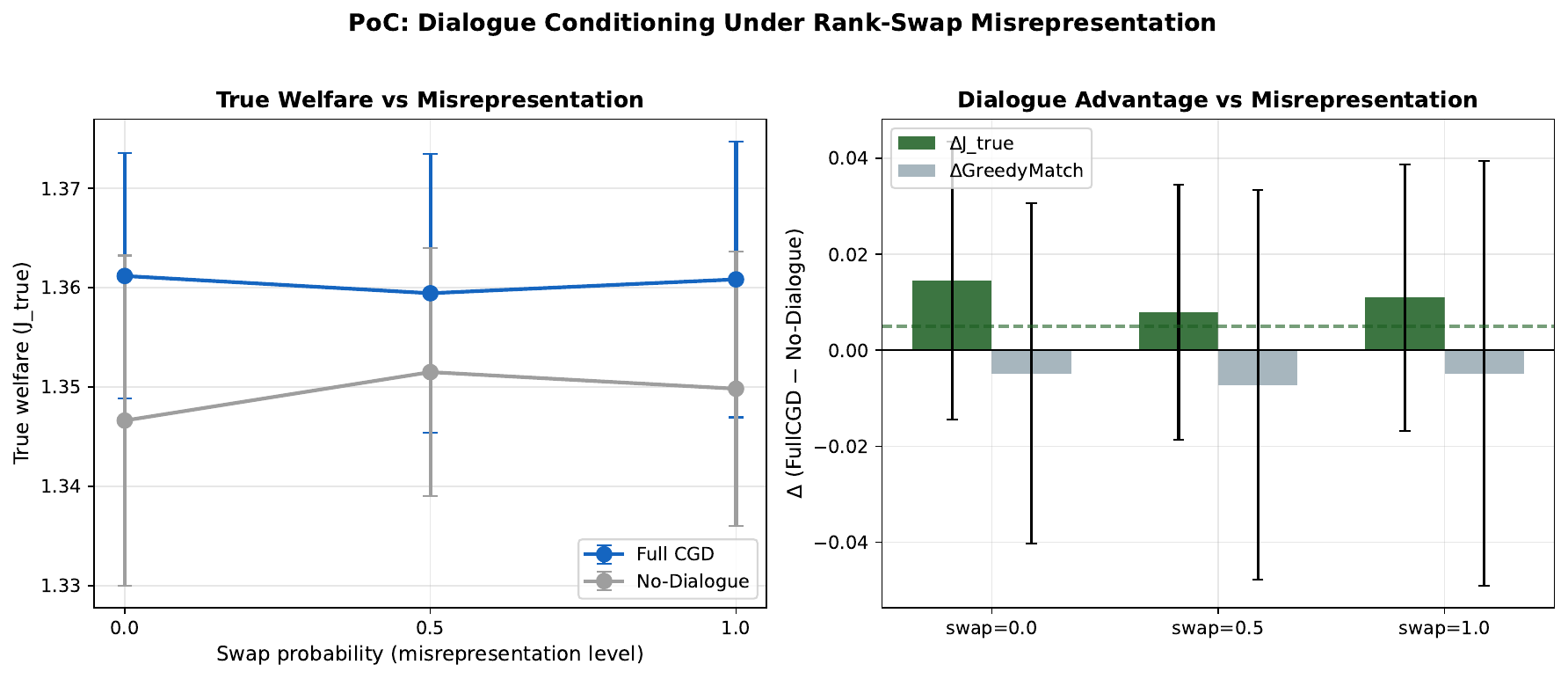}
  \caption{Dialogue conditioning under rank-swap misrepresentation.
  \textit{Left}: True welfare $J_{\mathrm{true}}$ across three
  misrepresentation levels (mean\,$\pm$\,std, 3 seeds, $\gamma = 0$).
  Full CGD consistently exceeds No-Dialogue. \textit{Right}:
  $\Delta J_{\mathrm{true}}$ and $\Delta$GreedyMatch. All
  $\Delta J_{\mathrm{true}}$ bars exceed the 0.005 significance
  threshold (dashed line); $\Delta$GreedyMatch bars are near zero,
  indicating the advantage arises from representational interactions
  rather than lexical preference decoding.}
  \label{fig:dialogue_poc}
\end{figure}

\subsection{Scalability Analysis}
\label{subsec:scalability}

Table~\ref{tab:scalability} and Figure~\ref{fig:scalability} address
Q6. CGD denoising time remains constant at approximately 350\,ms across
all $|\Items|$ values, consistent with the $O(T \cdot |\theta|)$ bound
of Proposition~\ref{prop:complexity}. NBS enumeration time grows from
0.05\,ms at $|\Items| = 3$ to 6.15\,ms at $|\Items| = 10$, a 123-fold
increase consistent with exponential growth in $|\Alloc|$. Post-processing
time grows modestly (15\,ms to 20\,ms). The crossover point at which
NBS enumeration time would exceed CGD denoising time is estimated at
$|\Items| \approx 17$ by extrapolation; this estimate is based entirely
on synthetic data. IR = 1.000 is maintained across all $|\Items|$ values
on synthetic data, confirming that normative guidance remains effective
as the allocation space grows in this controlled setting.

\begin{figure}[h!]
  \centering
  \includegraphics[width=\linewidth]{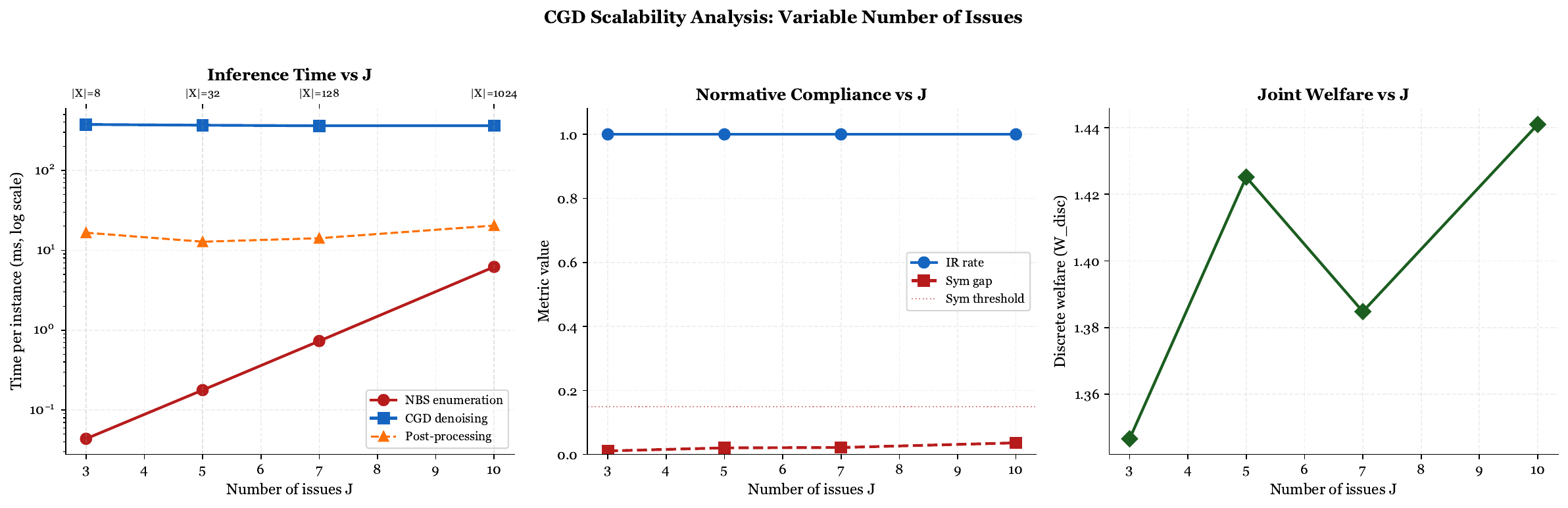}
  \caption{Scalability analysis across $|\Items| \in \{3,5,7,10\}$
  on synthetic data. \textit{Left}: CGD denoising time is constant at
  $\approx\!350$\,ms while NBS enumeration time grows exponentially
  (log scale). \textit{Centre}: IR rate and symmetry gap remain stable
  as $|\Items|$ increases. \textit{Right}: Discrete welfare
  $W_{\mathrm{disc}}$ varies with $|\Items|$ due to changing preference
  structure rather than model degradation.}
  \label{fig:scalability}
\end{figure}

\begin{table}[h!]
  \centering
  \caption{Scalability experiment (synthetic data). CGD denoising time
  and NBS enumeration time reported per instance in milliseconds.
  The crossover estimate at $|\Items| \approx 17$ is an extrapolation
  from these four data points.}
  \label{tab:scalability}
  \small
  \setlength{\tabcolsep}{4pt}
  \renewcommand{\arraystretch}{1.2}
  \begin{tabular}{rrccccrr}
    \toprule
    $|\Items|$ & $|\Alloc|$
      & \textbf{NBS (ms)}
      & \textbf{CGD (ms)}
      & \textbf{Post (ms)}
      & \textbf{Ratio}
      & \textbf{IR}
      & \textbf{Welfare} \\
    \midrule
    3  & 8    & 0.05 & 356 & 15 & 0.00014$\times$ & 1.000 & 1.349 \\
    5  & 32   & 0.18 & 349 & 13 & 0.00052$\times$ & 1.000 & 1.424 \\
    7  & 128  & 0.72 & 347 & 14 & 0.00208$\times$ & 1.000 & 1.384 \\
    10 & 1024 & 6.15 & 364 & 20 & 0.01690$\times$ & 1.000 & 1.441 \\
    \bottomrule
  \end{tabular}
\end{table}

\subsection{State-of-the-Art Comparison}
\label{subsec:sota}

Table~\ref{tab:sota} presents the full comparative evaluation under zero
BATNA. CGD is, to our knowledge, the first NSS to
simultaneously achieve IR $\geq 0.997$, security gap $\leq 0.009$, and
symmetry gap $\leq 0.15$ across all three evaluated corpora.

The key finding is normative. NBS achieves higher joint welfare than CGD
on all corpora but produces security gaps of 0.030 (Synthetic), 0.397
(CaSiNo), and 0.635 (DND), exceeding CGD's values by 5$\times$,
49$\times$, and 70$\times$ respectively, and violates the equitability
threshold on two of three corpora. CGD enforces all three normative
criteria simultaneously at a welfare cost of 1--3\%, establishing the
welfare-compliance tradeoff that defines the framework's operational
role relative to the NBS.

CGO confirms that the evaluated naive constraint minimization
configuration cannot replicate the normative compliance provided by the
learned generative prior. On DND, CGO fails individual rationality
entirely (IR = 0.092). On CaSiNo, CGO violates the equitability
threshold ($\Delta_{\mathrm{eq}} = 0.261$). On Synthetic, CGO only
marginally satisfies equitability (0.134). The learned distribution over
successful negotiated outcomes provides contextual grounding absent from
the uniform-initialized gradient descent baseline.

On DND, reward-maximizing fine-tuning in RNN-RL increases the symmetry
gap 28-fold relative to RNN-SL (from 0.009 to 0.257). CGD achieves a
symmetry gap of 0.006 while satisfying IR and security constraints,
and $W_{\mathrm{disc}} = 1.351$, a 19.6\% welfare improvement over
RNN-RL (1.130).

\begin{table}[h!]
  \centering
  \caption{State-of-the-art comparison (zero BATNA throughout).
  $\dagger$: values approximated from \citep{Lewis2017}.
  $\mathtt{n/a}$: security gap unavailable.
  CGD values: mean\,$\pm$\,std over three seeds.}
  \label{tab:sota}
  \small\sloppy
  \setlength{\tabcolsep}{4pt}
  \renewcommand{\arraystretch}{1.3}
  \begin{tabular}{llcccc}
    \toprule
    \textbf{Corpus} & \textbf{Method}
      & \textbf{IR}\,$\uparrow$
      & \textbf{Sec.\,gap}\,$\downarrow$
      & \textbf{Sym.\,gap}
      & \textbf{Welfare}\,$\uparrow$ \\
    \midrule
    \multirow{5}{*}{\textit{DND}}
      & RNN-SL & $0.879^\dagger$ & $\mathtt{n/a}$
               & $0.009^\dagger$ & $1.090^\dagger$ \\
      & RNN-RL & $0.899^\dagger$ & $\mathtt{n/a}$
               & $0.257^\dagger$ & $1.130^\dagger$ \\
      & NBS    & $1.000\pm.000$ & $0.635\pm.012$
               & $0.136\pm.003$ & $\mathbf{1.365}\pm.008$ \\
      & CGO    & $0.092\pm.011$ & $0.323\pm.018$
               & $0.111\pm.009$ & $1.353\pm.007$ \\
    \cmidrule{2-6}
      & \textbf{CGD}
        & $\mathbf{0.997}\pm.004$ & $\mathbf{0.009}\pm.002$
        & $\mathbf{0.006}\pm.001$ & $1.351\pm.001$ \\
    \midrule
    \multirow{3}{*}{\textit{CaSiNo}}
      & NBS & $1.000\pm.000$ & $0.397\pm.008$
            & $0.283\pm.005$ & $\mathbf{1.103}\pm.009$ \\
      & CGO & $1.000\pm.000$ & $0.203\pm.014$
            & $0.261\pm.011$ & $1.094\pm.006$ \\
    \cmidrule{2-6}
      & \textbf{CGD}
        & $\mathbf{1.000}\pm.000$ & $\mathbf{0.008}\pm.001$
        & $\mathbf{0.021}\pm.002$ & $1.071\pm.003$ \\
    \midrule
    \multirow{5}{*}{\textit{Synth.}}
      & Random & $0.542\pm.040$ & $0.217\pm.013$
               & $0.549\pm.014^\dagger$ & $0.994\pm.010$ \\
      & Greedy & $0.792\pm.008$ & $0.084\pm.002$
               & $0.433\pm.010^\dagger$ & $1.325\pm.012$ \\
      & NBS    & $1.000\pm.000$ & $0.030\pm.003$
               & $0.182\pm.004^\dagger$ & $\mathbf{1.393}\pm.010$ \\
      & CGO    & $0.983\pm.008$ & $0.013\pm.002$
               & $0.134\pm.012$ & $1.360\pm.006$ \\
    \cmidrule{2-6}
      & \textbf{CGD}
        & $\mathbf{1.000}\pm.000$ & $\mathbf{0.000}\pm.000$
        & $0.099\pm.013$ & $1.353\pm.005$ \\
    \bottomrule
  \end{tabular}
\end{table}

Chawla et al.~\citep{Chawla2023Selfish} and Kwon
et al.~\citep{kwon2025astra} are excluded from numerical comparison
due to incommensurable measurement scales. The symmetry recovery
in \citep{Chawla2023Selfish} is empirical rather than guaranteed and
does not address security proximity. The ASTRA
agent~\citep{kwon2025astra} achieves competitive individual utility
but exhibits nontrivial walk-away rates, precluding bilateral IR
guarantees.

\subsection{Welfare Guidance}
\label{sec:welfare_results}

This subsection addresses Q4 and provides the primary empirical
evidence for Contribution~C2. Weighted post-processing confirms that
reducing $\alpha$ below 1.0 degrades IR on Synthetic and DND
immediately and produces symmetry gaps above 0.15 on CaSiNo, with
$\alpha^* = 1.0$ optimal universally.

The welfare guidance weight $w_{\mathrm{welfare}}$
(Figure~\ref{fig:welfare_guidance}) is swept from 0 to 0.4 at each
corpus's optimal checkpoint. The IR rate remains at 1.000 throughout
the safe region, empirically validating the safety constraint of
Equation~\eqref{eq:welfare_safety}. Welfare guidance yields absolute
gains of $+0.028$ (+2.1\%) on Synthetic and $+0.032$ (+3.0\%) on
CaSiNo (Table~\ref{tab:welfare}). On CaSiNo, $w^* = 0.3$ recovers
full Pareto efficiency ($\Delta_P = 0.000$) while strictly preserving
IR = 1.000, the core empirical result of Contribution~C2. This
demonstrates that normative compliance and welfare maximization are
operationally decoupled within a single inference pipeline: the primary
guidance gradient unconditionally enforces all normative constraints,
and the welfare term then steers toward the Pareto frontier subject to
those constraints remaining satisfied, without modifying model weights.
The marginal DND gain ($+0.001$) reflects proximity to the discrete
welfare ceiling rather than a limitation of the mechanism.

\begin{figure}[h!]
  \centering
  \includegraphics[width=1\linewidth]{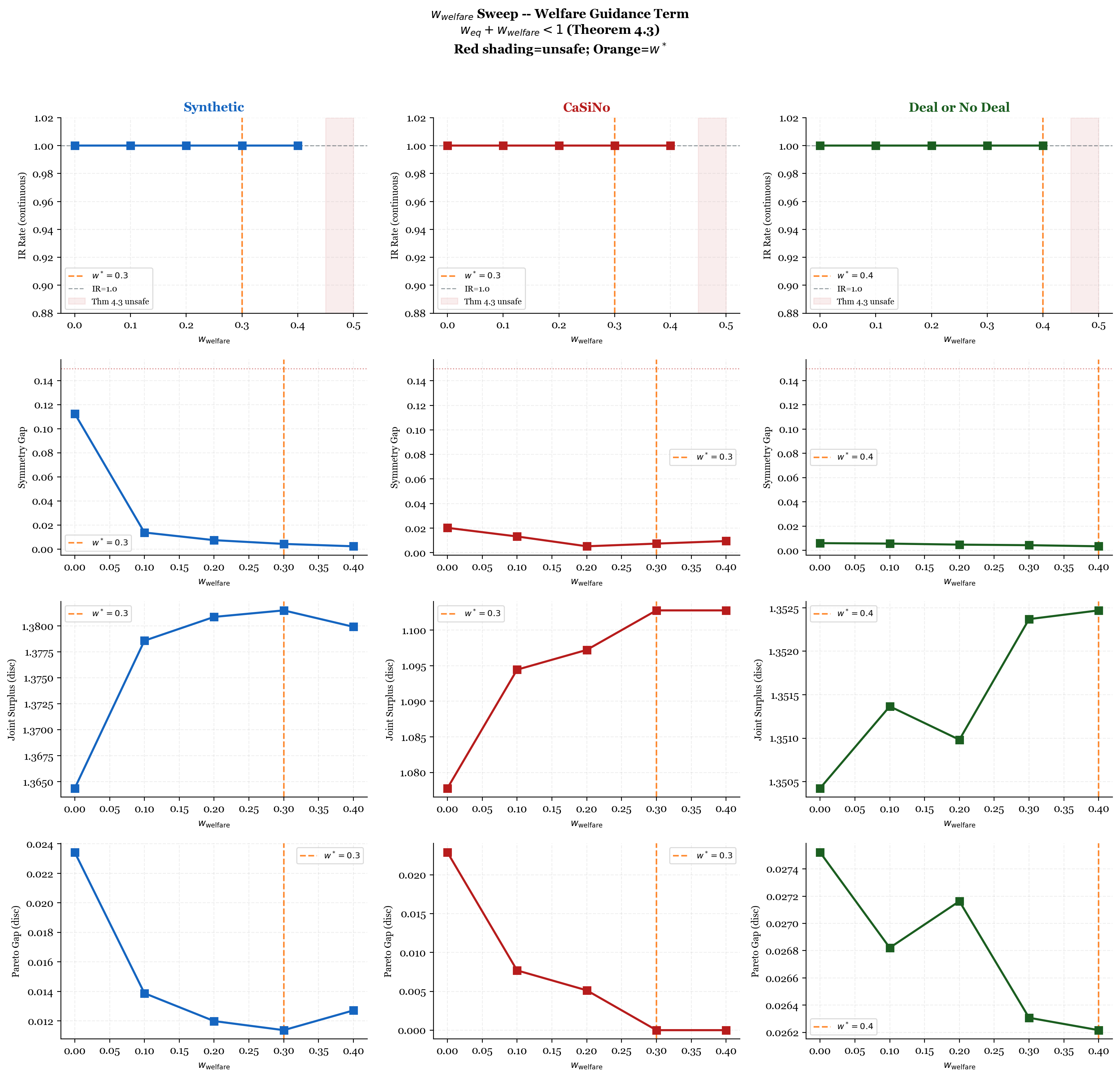}
  \caption{Welfare guidance sweep ($w_{\mathrm{welfare}}$). Red
  shading: unsafe region ($w_{\mathrm{eq}} + w_{\mathrm{welfare}}
  \geq 1$). IR remains at 1.000 throughout the safe region. At
  $w \geq 0.3$, CaSiNo achieves $\Delta_P = 0.000$.}
  \label{fig:welfare_guidance}
\end{figure}

\begin{table}[h!]
  \centering
  \caption{Welfare guidance results (Contribution~C2). Baseline:
  $\alpha = 1$, $w_{\mathrm{welfare}} = 0$. W2: welfare guidance at
  $w^*$. $\dagger$ denotes Pareto efficiency on CaSiNo at $w^* = 0.3$.}
  \label{tab:welfare}
  \small
  \setlength{\tabcolsep}{5pt}
  \renewcommand{\arraystretch}{1.2}
  \begin{tabular}{lcccccc}
    \toprule
    \textbf{Corpus}
      & \textbf{Baseline}
      & \textbf{W2}
      & $w^*$
      & $\Delta W$
      & $\Delta_P$
      & \textbf{IR} \\
    \midrule
    Synthetic & $1.353\pm.005$ & 1.381 & 0.3 & $+0.028$ & 0.011 & 1.000 \\
    CaSiNo    & $1.071\pm.003$ & 1.103 & 0.3 & $+0.032$
              & $\mathbf{0.000}^\dagger$ & 1.000 \\
    DND       & $1.351\pm.001$ & 1.352 & 0.4 & $+0.001$ & 0.026 & 1.000 \\
    \bottomrule
  \end{tabular}
\end{table}

\subsection{Theoretical Confirmation}
\label{subsec:theory_confirm}

The empirical results address each theoretical claim directly.

Theorem~\ref{thm:ir_guarantee} (Contribution~C1) provides a per-step directional guarantee: at each reverse diffusion step where individual rationality is violated, the guidance correction strictly increases the violating agent's utility given a sufficient guidance scale $\gamma$. This directional mechanism is empirically validated by the monotonic convergence of the IR rate as a function of $\gamma$ across all evaluated corpora (Figure~\ref{fig:guidance_sweep}), which is consistent with the per-step corrections accumulating successfully across the $T$ reverse steps. The marginal deviation observed on DND ($\mathrm{IR} = 0.997 \pm .004$) reflects seed-level corpus variance. This variance establishes that achieving a perfect empirical IR rate depends on whether these per-step corrections accumulate sufficiently at a calibrated $\gamma$, rather than being a direct algebraic consequence of the theorem itself. Ultimately, the theorem formalizes the underlying directional mechanism, while the empirical results confirm that standard calibration remains sufficient across diverse negotiation distributions.

Contribution~C2 is confirmed by Table~\ref{tab:welfare}:
welfare guidance at $w^* = 0.3$ recovers $\Delta_P = 0.000$ on CaSiNo
while IR remains at 1.000 across all tested weights within the safe
region, empirically validating the safety constraint of
Equation~\eqref{eq:welfare_safety} and demonstrating the operational
decoupling of normative compliance from welfare maximization.

The dialogue capacity in Contribution~C1 is validated through two complementary findings that align with our two-condition framework. First, the ablation study on the main corpora establishes that neither encoder variant yields social welfare improvements on the evaluated real-world datasets. This null result indicates that at least one of the two separable conditions fails to hold at current deployment scales. These conditions are corpus informativeness and encoder semantic competence. Second, the controlled misrepresentation experiment confirms the positive direction of this capacity. When the dialogue text is explicitly informative by design, Full CGD achieves strictly higher true welfare ($\Delta J_{\mathrm{true}} > 0.005$) than the No-Dialogue variant across all test conditions. This explicitly demonstrates that the cross-attention architectural capacity exists. It thereby isolates corpus informativeness, rather than encoder capacity, as the binding constraint for the real-data null result.

Proposition~\ref{prop:complexity} (Contribution~C3) predicts an $\mathcal{O}(T \cdot |\theta|)$ denoising cost that is entirely independent of the discrete allocation space $\mathcal{X}$. Table~\ref{tab:scalability} confirms this behavior by demonstrating a constant denoising time across item sets $|\Items| \in \{3,5,7,10\}$. In contrast, the NBS enumeration time grows 123-fold over the same range, closely corroborating the proposition. These initial scaling results are evaluated on synthetic data. 

\section{Discussion}
\label{sec:discussion}

The empirical evaluation confirms that continuous utility relaxation circumvents discrete infeasibility, while inference-time normative guidance enforces individual rationality (IR), security proximity, and equitability across synthetic and human-generated corpora. Furthermore, normative compliance and welfare maximization can be operationally decoupled within a single inference pipeline without model retraining.

The central finding of this work is fundamentally normative rather than welfare-based. Under consistent zero BATNA evaluation, the Nash Bargaining Solution (NBS) achieves higher joint welfare than CGD across all three corpora by approximately 3\% on Synthetic and CaSiNo, and 1\% on DND. CGD does not claim to maximize welfare over the NBS; its contribution is the simultaneous enforcement of security proximity and equitability, which the NBS structurally cannot provide. NBS security gaps of 0.397 on CaSiNo and 0.635 on DND exceed CGD's values by 49$\times$ and 70$\times$ respectively, confirming this distinction. The 1--3\% welfare cost of normative compliance represents the primary tradeoff between the two systems.

Theorem~\ref{thm:ir_guarantee} provides a per-step directional guarantee: at each reverse diffusion step $t$ where agent $i$ violates individual rationality, the guidance correction strictly increases $\hat{u}_{i,1}$ provided $\gamma \geq \gamma^*$. Three qualifications follow from this scope. First, the theorem does not guarantee that the final accumulated output $\hat{\mathbf{u}}_0$ satisfies IR. Whether per-step corrections accumulate sufficiently depends on the training distribution, the number of diffusion steps $T$, and the calibrated value of $\gamma$. The empirical IR rates of 1.000 on Synthetic and CaSiNo and 0.997 on DND confirm that calibrated $\gamma$ produces adequate accumulation on the evaluated corpora. Second, trajectory oscillation is not precluded. The guarantee operates strictly per-step: the DDPM update at step $t-1$ may reduce utility before the guidance correction at that step is applied. Global convergence is not asserted. Third, the post-processing step (Equation~\eqref{eq:postproc}) may select a discrete allocation that violates IR even when the continuous recommendation $\hat{\mathbf{u}}_0$ satisfies it. Under zero BATNA evaluation this projection gap is empirically negligible, with discrete IR reaching 1.000 across all three corpora. Under realistic proportional BATNA thresholds, the gap becomes corpus-dependent.

Discrete bargaining methods such as the NBS cannot apply analogous corrections. When the feasible IR set $\mathcal{X}_{\mathrm{IR}} = \emptyset$, no element of the discrete space satisfies individual rationality, leaving the NBS undefined. Because CGD operates in $\mathbb{R}^{|\mathcal{N}|}$, which always contains IR-satisfying vectors, the per-step correction is never structurally impeded. Consequently, CGD always produces a recommendation regardless of whether a feasible discrete allocation exists.

The Constraint Guided Optimization (CGO) ablation confirms that the learned generative prior is not redundant. Naive constraint minimization from a uniform initialization fails individual rationality entirely on DND and violates the equitability threshold on CaSiNo, despite applying the same normative objectives as CGD. The learned distribution over successful negotiated outcomes provides contextual grounding that direct gradient descent cannot replicate across heterogeneous corpora. This addresses the necessity of the DDPM architecture for what is formally a low-dimensional recommendation problem.

The training convergence curves in Figure~\ref{fig:convergence} provide complementary evidence. Validation MSE remains at or below training MSE throughout all 20 epochs across all three corpora. This pattern is consistent with effective regularization through weight decay ($\lambda_{\mathrm{wd}} = 10^{-2}$) and cosine annealing. It is also consistent with the $\gamma$-dominance result: because normative compliance at optimal $\gamma^*$ is insensitive to loss weights $\lambda$ across all 12 BATNA $\times$ $\lambda$ configurations, the denoiser functions primarily as a generative prior whose normative alignment is achieved at inference time. The model learns an underlying distribution over successful outcomes that the guidance gradient then steers toward the normative feasible region.

The 2.5$\times$ guidance scale differential between synthetic and human corpora ($\gamma^* = 2.0$ versus $\gamma^* = 5.0$) quantifies the degree to which human negotiation behavior departs from strict game-theoretic rationality. Once $\gamma$ is appropriately calibrated, BATNA strategy and loss weight $\lambda$ produce negligible differential effects across all 36 tested configurations. The guidance scale is therefore the sole normative control parameter at inference time. This simplifies deployment: practitioners need only tune $\gamma$, and the sensitivity of all normative metrics to this single parameter is monotone and interpretable from Figure~\ref{fig:guidance_sweep}.

Normative compliance and welfare maximization are operationally separable within a single inference pipeline. On CaSiNo, setting $w^* = 0.3$ recovers full Pareto efficiency ($\Delta_P = 0.000$) while strictly preserving $\mathrm{IR} = 1.000$. The primary guidance gradient enforces individual rationality, security proximity, and equitability unconditionally. The welfare term then steers the trajectory toward the Pareto frontier subject to those constraints remaining satisfied, removing the need for model retraining. The safety constraint $w_{\mathrm{eq}} + w_{\mathrm{welfare}} < 1$ (Equation~\eqref{eq:welfare_safety}) ensures that the IR correction remains dominant whenever a violation occurs. The IR rate remains at 1.000 throughout the safe region across all tested values of $w_{\mathrm{welfare}}$, empirically validating this constraint. The welfare guidance mechanism operates as a gradient heuristic and does not carry the axiomatic foundations of cooperative game theory; its practical utility lies entirely in the operational decoupling it enables.

The graph encoder ablation validates the GATv2 choice on corpora where preference asymmetries are strongest. The improvement from GAT to GATv2 on DND is pronounced: the security gap falls from 0.207 to 0.009 and welfare rises from 1.227 to 1.351. This supports the premise that dynamic attention is necessary to represent comparative bilateral preference signals. Static attention coefficients, computed independently for each node, cannot directly encode the joint comparison of agent preferences that determines efficient assignment. On Synthetic, all four encoder variants perform equivalently. This confirms that the GATv2 advantage is tied to preference asymmetry rather than general model capacity. The higher symmetry gap of GATv2 on Synthetic (0.099 versus 0.022--0.024) reflects the structural Pareto-equitability tension in the permutation-based generator, though all values remain within the acceptable 0.15 threshold.

Realizing the dialogue capacity requires two separable conditions to hold simultaneously: the corpus must carry recoverable preference signals beyond the declared value matrix $\mathbf{V}$, and the encoder must possess sufficient semantic competence to extract those signals. Neither encoder variant produces welfare improvements over the value-matrix-only baseline on CaSiNo or DND. The controlled misrepresentation experiment isolates the cause of this null result by constructing a setting where the first condition holds by definition: the dialogue explicitly encodes true preferences while declarations are systematically misrepresented. Full CGD achieves strictly higher true welfare than No-Dialogue across all three swap levels ($\Delta J_{\mathrm{true}} \in [0.008, 0.015]$). The architecture is therefore capable of exploiting dialogue signals when they are present. This result identifies corpus informativeness, rather than encoder capacity, as the binding constraint for the real-data null result. The CaSiNo and DND dialogues do not carry preference signals that are recoverable beyond $\mathbf{V}$ at current training scales. Pre-trained encoder integration coupled with fine-tuning on corpora documenting strategic misrepresentation offers the primary path to realizing dialogue capacity on real negotiation data.

Proposition~\ref{prop:complexity} establishes that CGD denoising cost is $\mathcal{O}(T \cdot |\theta|)$, independent of the discrete allocation space $\mathcal{X}$. However, the post-processing step retains exponential dependence on the number of items $|\Items|$. Continuous relaxation therefore shifts the exponential bottleneck from the optimization phase to the post-processing phase rather than eliminating it entirely. Figure~\ref{fig:scalability} and Table~\ref{tab:scalability} confirm this empirically. Denoising time remains constant at approximately 350\,ms across $|\Items| \in \{3, 5, 7, 10\}$ and all corpora. Conversely, NBS enumeration time grows 123-fold over the same range, consistent with exponential growth. Post-processing time also grows with $|\Items|$ but remains under 25\,ms throughout the evaluated range. An $\mathrm{IR} = 1.000$ is maintained across all $|\Items|$ values and all corpora, confirming that normative guidance remains effective as the allocation space grows. The crossover point at which NBS enumeration time exceeds CGD denoising time is estimated at $|\Items| \approx 17$ by extrapolation. Beyond this point, exhaustive discrete search becomes the bottleneck for the NBS while CGD denoising remains constant. Developing sub-exponential projection heuristics for the post-processing step remains an open direction to alleviate this remaining exponential cost.

Four primary limitations bound this framework. First, continuous relaxation eliminates hard solver failures but cannot recover an IR-feasible discrete allocation when none exists in $\mathcal{X}$. In such cases, the projection step produces the nearest feasible allocation to the continuous recommendation, minimizing but not eliminating the residual IR gap. Second, while the denoising inference cost is independent of the allocation space, post-processing retains an exponential dependence on $|\Items|$, with a practical crossover estimated at $|\Items| \approx 17$. Third, Theorem~\ref{thm:ir_guarantee} provides a per-step directional guarantee rather than a global convergence guarantee. Generalization of empirical IR rates to other distributions is not guaranteed by the theorem alone. Fourth, the welfare guidance mechanism operates as a gradient heuristic without the axiomatic foundations of cooperative bargaining theory, relying instead on empirical validation of its operational decoupling performance.

\section{Conclusion}
\label{sec:conclusion}

This study introduces the CGD framework, a NSS that generates allocation recommendations in a continuous bilateral utility space. A GATv2 strategic encoder captures bilateral preference structure through dynamic attention, while a cross-attention mechanism fuses these representations with dialogue context. A DDPM denoiser then generates recommendations conditioned on the fused representation, utilizing an analytically derived inference-time guidance gradient to enforce individual rationality, security proximity, and equitability without altering model weights.

CGD simultaneously achieves $\mathrm{IR} \geq 0.997$, security gap $\leq 0.009$, and symmetry gap $\leq 0.15$ across multiple evaluated corpora, including real-world dialogue datasets. While the NBS achieves higher joint welfare, it produces severe security gaps on CaSiNo (0.397) and DND (0.635)---exceeding CGD's values by 49$\times$ and 70$\times$ respectively---and violates equitability on two of the three corpora. CGD successfully enforces all three normative criteria simultaneously at a minor welfare cost of 1--3\%.

Welfare guidance at $w^* = 0.3$ recovers Pareto efficiency on CaSiNo at inference time while strictly preserving $\mathrm{IR} = 1.000$. This demonstrates that normative compliance and welfare maximization can be operationally decoupled within a single inference pipeline without modifying model weights. The primary guidance gradient enforces normative constraints unconditionally, while the welfare term maximizes overall welfare subject to those constraints remaining satisfied.

Dialogue conditioning experiments establish both sides of the capacity boundary. The cross-attention architecture successfully exploits dialogue signals when present, as confirmed under controlled misrepresentation conditions. However, this capacity is not realized on the evaluated real-world corpora, indicating that corpus informativeness, rather than encoder capacity, remains the binding constraint.

Three directions for future work follow. First, pre-trained encoder integration with fine-tuning on strategic misrepresentation data offers a path to unlocking dialogue capacity on real-world text. Second, evaluation on corpora with larger item sets will further test the constant-time inference advantage where exhaustive NBS enumeration is practically infeasible. Finally, extending the framework to multiparty negotiations will require generalizing the normative guidance gradient to enforce pairwise equitability across all participant pairs.

\section*{Declaration of Competing Interest}
The author declares that they have no known competing financial interests or personal relationships that could have appeared to influence the work reported in this paper.

\section*{Declaration of Generative AI and AI Assisted Technologies in the Writing Process}
During the preparation of this work the author used generative artificial intelligence technologies solely to improve language clarity and readability. After using these tools, the author reviewed and edited the content as needed and takes full responsibility for the final content of the publication.

\section*{Data Availability}
The empirical datasets analyzed during the current study, specifically the CaSiNo human negotiation corpus and the Deal or No Deal corpus, are publicly available from their respective original creators. The synthetic utility dataset generated during the current study is available from the corresponding author on reasonable request.

\section*{Code Availability}
The computational code required to reproduce the guided graph diffusion framework, the baseline models, and the experimental results presented in this study is available from the corresponding author on reasonable request.

\bibliographystyle{elsarticle-num}
\bibliography{ref}

\end{document}